\DeclarePairedDelimiter{\ceil}{\lceil}{\rceil}
\definecolor{lightgrey}{rgb}{0.9,0.9,0.9}
\definecolor{darkgreen}{rgb}{0,0.3,0}
\newtheorem{prop}{Proposition}
\newcommand{\argmax}{\operatornamewithlimits{arg\,max}}
\newcommand{\argmin}{\operatornamewithlimits{arg\,min}}
\newcommand{\order}[1]{{\cal O}\hspace{-0.2em}\left( #1 \right)}
\definecolor{trevorblue}{rgb}{0.330, 0.484, 0.828}
\definecolor{trevoryellow}{rgb}{0.829, 0.680, 0.306}
\title{A quantum parallel Markov chain Monte Carlo}
\date{}
\author{Andrew J.~Holbrook}
\affil{UCLA Biostatistics}
\def\HiLi{\leavevmode\rlap{\hbox to \hsize{\color{trevorblue!15}\leaders\hrule height 0.9\baselineskip depth .5ex\hfill}}}
\def\PYG@reset{\let\PYG@it=\relax \let\PYG@bf=\relax%
	\let\PYG@ul=\relax \let\PYG@tc=\relax%
	\let\PYG@bc=\relax \let\PYG@ff=\relax}
\def\PYG@tok#1{\csname PYG@tok@#1\endcsname}
\def\PYG@toks#1+{\ifx\relax#1\empty\else%
	\PYG@tok{#1}\expandafter\PYG@toks\fi}
\def\PYG@do#1{\PYG@bc{\PYG@tc{\PYG@ul{%
				\PYG@it{\PYG@bf{\PYG@ff{#1}}}}}}}
\def\PYG#1#2{\PYG@reset\PYG@toks#1+\relax+\PYG@do{#2}}
\def\PYGdefault@reset{\let\PYGdefault@it=\relax \let\PYGdefault@bf=\relax%
	\let\PYGdefault@ul=\relax \let\PYGdefault@tc=\relax%
	\let\PYGdefault@bc=\relax \let\PYGdefault@ff=\relax}
\def\PYGdefault@tok#1{\csname PYGdefault@tok@#1\endcsname}
\def\PYGdefault@toks#1+{\ifx\relax#1\empty\else%
	\PYGdefault@tok{#1}\expandafter\PYGdefault@toks\fi}
\def\PYGdefault@do#1{\PYGdefault@bc{\PYGdefault@tc{\PYGdefault@ul{%
				\PYGdefault@it{\PYGdefault@bf{\PYGdefault@ff{#1}}}}}}}
\def\PYGdefault#1#2{\PYGdefault@reset\PYGdefault@toks#1+\relax+\PYGdefault@do{#2}}
\begin{document}

\maketitle

\begin{abstract}

We propose a novel hybrid quantum computing strategy for parallel MCMC algorithms that generate multiple proposals at each step. This strategy makes the rate-limiting step within parallel MCMC amenable to quantum parallelization by using the Gumbel-max trick to turn the generalized accept-reject step into a discrete optimization problem.  When combined with new insights from the parallel MCMC literature, such an approach allows us to embed target density evaluations within a well-known extension of Grover's quantum search algorithm.  Letting $P$ denote the number of proposals in a single MCMC iteration, the combined strategy reduces the number of target evaluations required from $\order{P}$ to $\order{P^{1/2}}$.  In the following, we review the rudiments of quantum computing, quantum search and the Gumbel-max trick in order to elucidate their combination for as wide a readership as possible.

\end{abstract}

\section{Introduction}\label{sec:intro}

\newcommand{\ttheta}{\boldsymbol{\theta}}
\newcommand{\Ttheta}{\boldsymbol{\Theta}}
\newcommand{\dd}{\mbox{d}}
\newcommand{\ppsi}{\boldsymbol{\psi}}
\newcommand{\U}{\mathbf{U}}
\newcommand{\I}{\mathbf{I}}
\renewcommand{\H}{\mathbf{H}}
\newcommand{\A}{\mathbf{A}}
\newcommand{\B}{\mathbf{B}}
\newcommand{\G}{\mathbf{G}}
\newcommand{\ppi}{\boldsymbol{\pi}}
\newcommand{\llambda}{\boldsymbol{\lambda}}

Parallel MCMC techniques use multiple proposals to obtain efficiency gains over MCMC algorithms such as Metropolis-Hastings \citep{metropolis1953equation,hastings1970monte} and its progeny that use only a single proposal.  \citet{neal2003markov} first develops efficient MCMC transitions for inferring the states of hidden Markov models by proposing a `pool' of candidate states and using dynamic programming to select among them. Next, \citet{tjelmeland2004using} considers inference in the general setting and shows how to maintain detailed balance for an arbitrary number $P$ of proposals.  Consider a probability distribution $\pi(\dd\ttheta)$ defined on $\mathbb{R}^D$ that admits a probability density $\pi(\ttheta)$ with respect to the Lebesgue measure, i.e., $\pi(\dd \ttheta)=:\pi(\ttheta)\dd\ttheta$. To generate samples from the target distribution $\pi$, we craft a kernel $P(\ttheta_0,\dd \ttheta)$ that satisfies
\begin{align}\label{eq:fixes}
\pi(A) = \int \pi(\dd \ttheta_0) P(\ttheta_0,A) 
\end{align} 
for all $A \subset \mathbb{R}^D$.  Letting $\ttheta_0$ denote the current state of the Markov chain, \citet{tjelmeland2004using} proposes sampling from such a kernel $P(\ttheta_0,\cdot)$ by drawing $P$ proposals $\Ttheta_{-0}=(\ttheta_1,\dots,\ttheta_P)$ from a joint distribution $Q(\ttheta_0,\dd \Ttheta_{-0}) =: q(\ttheta_0, \Ttheta_{-0})\dd  \Ttheta_{-0}$ and selecting the next Markov chain state from among the current and proposed states with probabilities
\begin{align}\label{eq:probs}
\pi_p = \frac{\pi(\ttheta_p) q(\ttheta_p, \Ttheta_{-p})}{\sum_{p'=0}^P \pi(\ttheta_{p'})q(\ttheta_{p'}, \Ttheta_{-p'})} \, , \quad p=0,\dots,P \, .
\end{align}
Here, if $\Ttheta=(\ttheta_0,\dots,\ttheta_{P})$ is the $D\times (P+1)$ matrix having the current state and all $P$ proposals as columns, then $\Ttheta_{-p}$ is the $D\times P$ matrix obtained by removing the $p$th column.
\citet{tjelmeland2004using} shows that the kernel $P(\ttheta_0,\cdot)$ constructed in such a manner maintains detailed balance and hence satisfies \eqref{eq:fixes}.  Others have since built on this work, developing parallel MCMC methods that generate or recycle multiple proposals \citep{frenkel2004speed,delmas2009does,calderhead2014general,yang2018parallelizable,luo2019multiple,schwedes2021rao,holbrook2021generating}.  Most recently, \citet{glatt} place all these developments in their natural measure theoretic context, allowing one to trivially apply \eqref{eq:fixes} and \eqref{eq:probs} to distributions over discrete-valued random variables.

Taken together, these developments demonstrate the ability of parallel MCMC methods to alleviate inferential challenges such as multimodality and to deliver performance gains over single-proposal competitors as measured by reduced autocorrelation between samples.  In the following, we focus on the specification presented in Algorithm \ref{alg:pMCMC} but note that the techniques we present may also be effective for generalizations of this basic algorithm.

\begin{figure}[!t]
	\centering
	\scalebox{0.9}{
\begin{minipage}{1\textwidth}
\begin{algorithm}[H]
	\caption{Parallel (multiproposal) MCMC \citep{tjelmeland2004using}}\label{alg:pMCMC}
	\KwData{Initial Markov chain state $\ttheta^{(0)}$; total length of Markov chain $S$; total number of proposals per iteration $P$; routine for evaluating target density $\pi(\cdot)$; routines for drawing random samples from the proposal distribution $Q(\ttheta^{(s)},\cdot)$ and from a $P+1$ discrete distribution $Discrete(\ppi)$ given some probability vector $\ppi=(\pi_0,\dots,\pi_P)$.}
	\KwResult{A Markov chain $\ttheta^{(1)}, \dots, \ttheta^{(S)}$.}
	\For{$s \in\{1,\dots,S\}$}{
		$\ttheta_0 \gets \ttheta^{(s-1)}$\;
		$\Ttheta_{-0} \gets Q(\ttheta_0,\cdot)$\;
		\For{$p \in \{0,\dots,P\}$}{
			$\pi_{p} \gets \pi(\ttheta_{p})\, q(\ttheta_{p},\Ttheta_{-p})$\; 
		}
		$\hat{p} \gets Discrete(\ppi/\ppi^T\mathbb{1})$\;
		$\ttheta^{(s)} \gets \ttheta_{\hat{p}}$\;
	}
	\Return{$\ttheta^{(1)}, \dots, \ttheta^{(S)}$}\ .
	\vspace{0.5em}
	
\end{algorithm}
\end{minipage}
}
\end{figure}

One need not use parallel computing to implement Algorithm \ref{alg:pMCMC}, but the real promise and power of parallel MCMC comes from its natural parallelizability \citep{calderhead2014general}. 
  Contemporary hardware design emphasizes architectures that enable execution of multiple mathematical operations simultaneously. Parallel MCMC techniques stand to leverage technological developments that keep modern computation on track with Moore's Law, which predicts that processing power doubles every two years.  For example, the algorithm of \citet{tjelmeland2004using} generates $P$ conditionally independent proposals and then evaluates the probabilities of \eqref{eq:probs}.  One may parallelize the proposal generation step using parallel pseudorandom number generators (PRNG) such as those advanced in \citet{salmon2011parallel}. The computational complexity of the target evaluations $\pi(\ttheta_p)$ is linear in the number of proposals. This presents a significant burden when $\pi(\cdot)$ is computationally expensive, e.g., in big data settings or for Bayesian inversion, but evaluation of the target density over the $P$ proposals is again a naturally parallelizable task.  Moreover, widely available machine learning software such as \textsc{TensorFlow} allows users to easily parallelize both random number generation and target evaluations on general purpose graphics processing units (GPU) \citep{lao2020tfp}. Finally, when generating independent proposals using a proposal distribution of the form $q(\ttheta_0,\Ttheta_{-0})=\prod_{p=1}^Pq(\ttheta_0,\ttheta_{p})$, the acceptance probabilities \eqref{eq:probs} require the $\order{P^2}$ evaluation of the $P+1\choose 2$ terms $q(\ttheta_{p},\ttheta_{p'})$, but \citet{massive,holbrook2021scalable} demonstrate the natural parallelizability of such pairwise operations, obtaining orders-of-magnitude speedups with contemporary GPUs.  The proposed method directly addresses the acceptance step of Algorithm \ref{alg:pMCMC}, while leaving the choice of parallelizing (or not parallelizing) the proposal step to the practitioner.

While parallel MCMC algorithms are increasingly well-suited for developing many-core computational architectures, there are trade-offs that need to be taken into account when choosing how to allocate computational resources.  On one end of the spectrum, \citet{gelman1992inference,gelman1992single} demonstrate the usefulness of generating, combining and comparing multiple independent Markov chains that target the same distribution, and one may perform this embarrassingly parallel task by assigning the operations for each individual chain to a separate central processing unit (CPU) core or GPU work-group.  In this multi-chain context, simultaneously running multiple parallel MCMC chains could limit resources available for the within-chain parallelization described in the previous paragraph.  On the other end of the spectrum, one may find it useful to allocate resources to overcome computational bottlenecks within a single Markov chain that uses a traditional accept-reject step. In big data contexts, \citet{massive,holbrook2021scalable,holbrook2022bayesian,holbrook2022viral} use multi- and many-core processing to accelerate log-likelihood and log-likelihood gradient calculations within single Metropolis-Hastings and Hamiltonian Monte Carlo \citep{neal2011mcmc} generated chains.  This big data strategy might again limit resources available for parallelization across proposals and target evaluations described above.

In the presence of these trade-offs, it is worthwhile to develop additional parallel computing tools for parallel MCMC so that future scientists may be better able to flexibly assign sub-routines to different computing resources in a way that is tailored to their specific inference task.  In particular, we show that quantum parallelization can be a useful tool for parallel MCMC when evaluation of the target density represents the computational bottleneck.

Quantum computers use quantum mechanics to store information and manipulate data. By leveraging the idiosyncracies of quantum physics, these computers are able to deliver speedups over conventional computing for a relatively small number of computational problems. Some of these speedups are very large. Quantum computers can achieve \emph{exponential} speedups over conventional computers for some computational tasks..  Shor's quantum algorithm for factoring an integer $N$ \citep{shor} is polynomial in $\log N$ compared to a super-polynomial classical optimum. The HHL algorithm for solving sparse $N$-dimensional linear systems \citep{harrow2009quantum} is $\order{\log N}$ compared to a classical $\order{N}$. Other algorithms deliver a still impressive \emph{polynomial} speedup.  For example, the algorithms considered in the following (Section \ref{sec:qmin}) achieve quadratic speedups over conventional computing, turning  $\order{N}$ runtimes into $\mathcal{O}(\sqrt{ N})$.  
 Both the magnitude of quantum computing's successes and the relative rarity of those successes mean that there is a general interest in leveraging quantum algorithms for previously unconsidered computational problems.  We will show that---with the help of the Gumbel-max trick (Section \ref{sec:gm})---established quantum optimization techniques are actually useful for sampling from computationally expensive discrete distributions and apply this insight to achieving quadratic speedups for parallel MCMC.

We provide a short introduction to the most basic elements of quantum computing in Appendix \ref{sec:shortIntro}. The interested reader may look to \citet{nielsen2002quantum} for a canonical introduction to quantum algorithms or \citet{lopatnikova2021introduction,wang2022quantum} for surveys geared toward statisticians.   In Section \ref{sec:qSearch}, we review and compare three different quantum search algorithms that enjoy quadratic speedups over conventional algorithms.  In Section \ref{sec:qmin} we review the quantum minimization algorithm of \citet{durr1996quantum} and investigate the use of warm-starting therein.  We then combine the Gumbel-max trick (Section \ref{sec:gm}) and recent insights in parallel MCMC with the quantum minimization algorithm to create the quantum parallel MCMC algorithm for both continuously-valued (Section \ref{sec:continu}) and discrete-valued (Section \ref{sec:discrete}) targets.

\section{Quantum search and quantum minimization}\label{sec:searchAndMin}

\citet{grover1996fast} demonstrates how use a quantum computer to find a single marked element within a finite set of $N$ items with only $\mathcal{O}(\sqrt{N})$ queries, and a result from \citet{bennett1997strengths} shows that Grover's algorithm is optimal up to a multiplicative constant. \citet{boyer1998tight} extend Grover's algorithm to multiple marked items or solutions, further extend it to the case when the number of solutions is unknown and provide a rigorous bounds on the algorithms' error rates.  Finally, \citet{durr1996quantum} use the results of \citet{boyer1998tight} to obtain the minimum value within a discrete ordered set.  Here, we briefly review these advances and provide illustrative simulations.  In Section \ref{sec:gm}, the Gumbel-max trick allows us to extend these results to the problem of sampling from a discrete distribution.

\subsection{Quantum search}\label{sec:qSearch}

\begin{figure}[!t]
	\centering
	\scalebox{0.9}{
\begin{minipage}{1\textwidth}
\begin{algorithm}[H]
	\caption{Quantum search algorithm \citep{grover1996fast}}\label{alg:grover}
	\KwData{An oracle gate $\U_f$ taking $\ket{x}\ket{y}$ to $\ket{x}\ket{y\oplus f(x)}$ for a function $f(x):\{0,\dots,N-1\}\rightarrow \{0,1\}$ that satisfies $f(x_0)=1$ for a single $x_0$; $n+1=\log_2(N)+1$ quantum states initialized to $\ket{0}^{\otimes n} \ket{1}$; an integer $R=\ceil{\pi\sqrt{N}/4}$ denoting the number of Grover iterations.}
	\KwResult{An $n$-bit binary string $x_0$ satisfying $f(x_0)=1$ with error less than $1/N$.}
	$\ket{0}^{\otimes n}\ket{1} \longrightarrow \H^{\otimes n+1}\ket{0}^{\otimes n}\ket{1}=\ket{h}\ket{-}$; \hspace{5em} \Comment{\hspace{0.3em}$\ket{h}=\frac{1}{\sqrt{N}} \sum_{x=0}^{N-1} \ket{x}$.} 
	$\ket{h}\ket{-} \longrightarrow \G^R \ket{h}\ket{-} \approx \ket{x_0}\ket{-}$;    \hspace{3em}         \Comment{$\G= \Big(2\ket{h}\bra{h}-\I \Big) \Big( \I - 2 \ket{x_0}\bra{x_0} \Big)$}
	$\ket{x_0} \longrightarrow x_0$\;
	\Return{$x_0$}\ .
		\vspace{0.5em}
		
\end{algorithm}
\end{minipage}
}
\end{figure}

\begin{figure}[!t]
	\centering
	\includegraphics[width=0.8\linewidth]{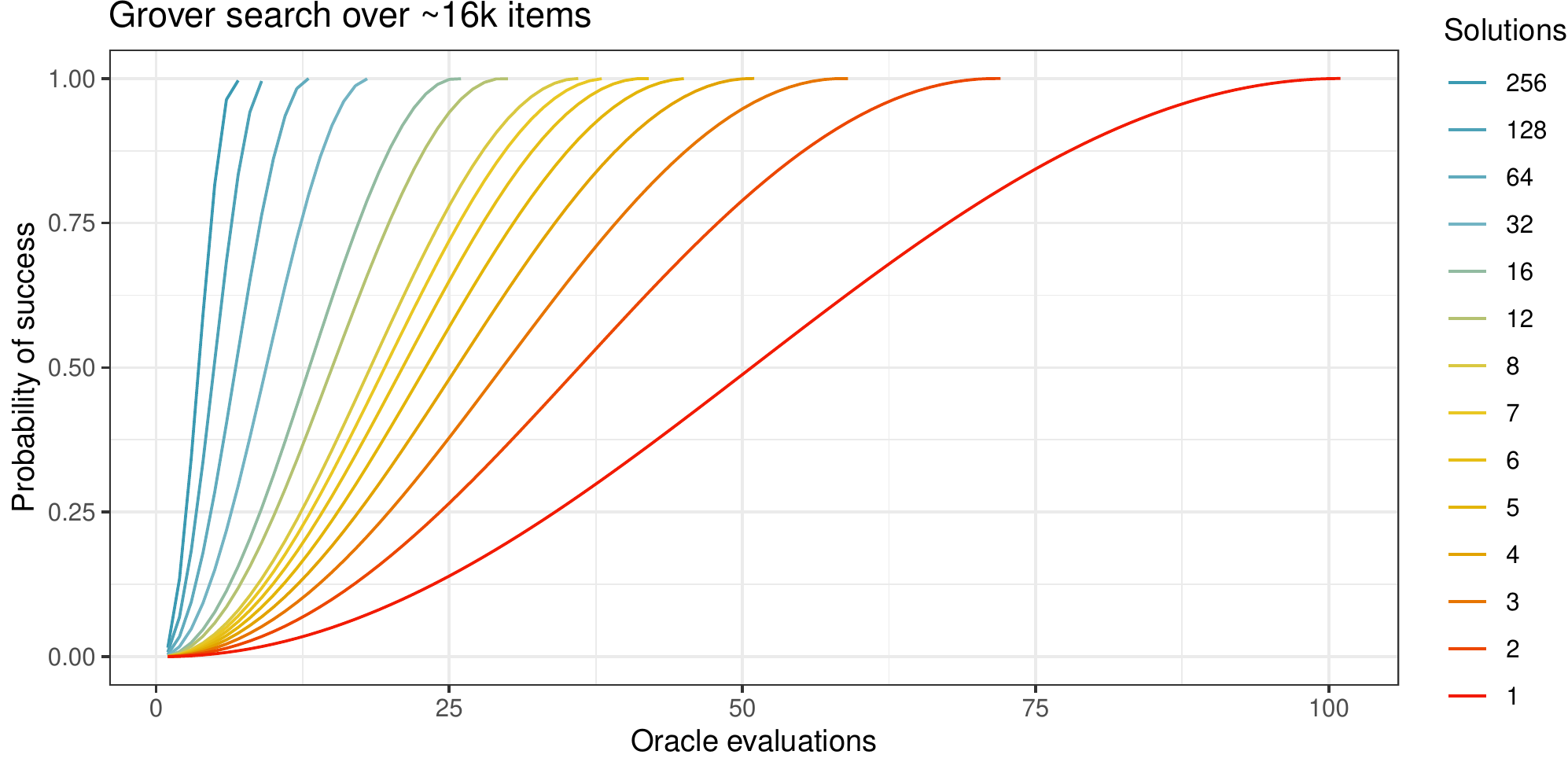}
	\caption{Curves depict the probability that Grover's quantum search algorithm returns a solution as a function of the number of oracle evaluations. For the ranges depicted, the number of iterations required and the number of solutions inversely relate, but increasing the number of iterations past \emph{R} (Equation \ref{eq:R}) can backfire and lead to extremely small probabilities of success.  }\label{fig:grov}
\end{figure}

Given a set of $N$ items and a function $f:\{0,1,\dots,N-1\}\rightarrow \{0,1\}$ that evaluates to $1$ for a single element, \citet{grover1996fast} develops an algorithm that uses quantum parallelism to score quadratic speedups over its classical counterparts. After only $\mathcal{O}(\sqrt{N})$ evaluations of $f(\cdot)$, Grover's algorithm returns the $x\in \{0,1,\dots,N-1\}$ satisfying $f(x) =1$ with high probability.  Compare this to $\order{N}$ requirement for the randomized classical algorithm that must evaluate $f(\cdot)$ over at least $N/2$ items to obtain the same probability of detecting $x$.  The algorithm takes the state $\ket{0}^{\otimes N}\ket{1}$ as input and applies Hadamard gates to each of the individual $N+1$ input qubits.  The resulting state is 
\begin{align*}
\ket{h}\ket{-}= \left(\frac{1}{\sqrt{N}} \sum_{x=0}^{N-1} \ket{x} \right)  \frac{1}{\sqrt{2}}\left(\ket{0} -\ket{1}\right) =  \frac{1}{\sqrt{N}} \sum_{x=0}^{N-1} \ket{x} \ket{-} \, .
\end{align*}
Next, we apply the oracle gate $\U_f: \ket{x}\ket{y} \rightarrow \ket{x}\ket{y\oplus f(x)}$ and note that 
\begin{align*}
\U_f \ket{x} \ket{-} &= \U_f \ket{x} \frac{1}{\sqrt{2}}\left(\ket{0} -\ket{1}\right) 
= \frac{1}{\sqrt{2}} \big(\ket{x} \ket{0\oplus f(x)} - \ket{x}\ket{1\oplus f(x)} \big) \\
&= -1^{f(x)} \ket{x}\ket{-}  \, .
\end{align*}
Thus, $\U_f$ flips the sign for the state $\ket{x_0}$ for which $f(x_0)=1$ but leaves the other states unchanged.  If we suppress the ancillary qubit $\ket{-}$, then $\U_f$ is equivalent to the gate $\U_{x_0}$ defined as $\U_{x_0} \ket{x} = -1^{\delta_{x_0}}\ket{x}$. We may succinctly write this gate as the Householder matrix that reflects vectors about the unique hyperplane through the origin that has $\ket{x_0}$ for a normal vector:
\begin{align*}
\U_{x_0}= \I - 2 \ket{x_0}\bra{x_0} \, .
\end{align*}
The action of this gate on the state $\ket{h}$ takes the form
\begin{align*}
\frac{1}{\sqrt{N}} \sum_{x=0}^{N-1} \ket{x} \longrightarrow \frac{1}{\sqrt{N}} \sum_{x=0}^{N-1} -1^{\delta_{x_0}(x)} \ket{x} \, .
\end{align*}
Next, the algorithm reflects the current state about the hyperplane that has $\ket{h}$ as a normal vector and negates the resulting state:
\begin{align*}
\Big(2\ket{h}\bra{h}-\I \Big) \left( \frac{1}{\sqrt{N}} \sum_{x=0}^{N-1} -1^{\delta_{x_0}(x)} \ket{x}  \right)
&= \frac{1}{\sqrt{N}}\sum_{x} \left( (-1)^{1-\delta_{x_0}(x)} + 2 \frac{(N-2)}{N} \right)\ket{x}\\ &= \frac{(3N-4)}{N^{3/2}} \ket{x_0} + \sum_{x\neq x_0} \frac{(N-4)}{N^{3/2}}  \ket{x} \, .
\end{align*}
The scientist who measures the state at this moment would obtain the desired state $\ket{x_0}$ with a slightly higher probability of $(3N-4)^2/N^3$  than the individual probabilities of $(N-4)^2/N^3$ for the other states. Each additional application of the \emph{Grover iteration}
\begin{align}\label{eq:grov}
\G := - \U_h \U_{x_0} := \Big(2\ket{h}\bra{h}-\I \Big) \Big( \I - 2 \ket{x_0}\bra{x_0} \Big)
\end{align}
increases the probability of obtaining $\ket{x_0}$ at the time of measurement in the computational basis and $R=\ceil{\pi\sqrt{N}/4}$ iterations guarantees a probability of success that is greater than $1-1/N$.

In general, we may use Grover's search algorithm to find a solution when the number of solutions $M$ is greater than 1. While the algorithmic details change little, the number of required Grover iterations
\begin{align}\label{eq:R}
R = \ceil[\Bigg]{\frac{\pi}{4} \sqrt{\frac{N}{M}}}
\end{align}
 and the probability of success after those $R$ iterations \emph{do} change \citep{boyer1998tight}.  When $M$ is much smaller than $N$, the success rate is greater than $1-M/N$, and even for large $M$ the success rate is more than $1/2$. Lower-bounds are useful for establishing mathematical guarantees, but it is also helpful to understand the quality of algorithmic performance as a function of $M$ and $R$. Figure \ref{fig:grov} shows success curves as a function of the number of Grover iterations (or oracle calls) applied.  The search field contains $2^{14} \approx 16$k elements, and the number of solutions $M$ varies. Each curve represents an individual search task.  The algorithm requires more iterations for smaller numbers and fewer iterations for larger numbers of solutions.  The upper bound on error $M/N$ is only an upper bound: the final probability of success for, e.g., $M=256$ is 0.997 compared to the bound of 0.984.
 
\begin{figure}[!tp]
	\centering
	\scalebox{0.9}{
		\begin{minipage}{1\textwidth}
			\begin{algorithm}[H]
	\caption{Quantum exponential searching algorithm \citep{boyer1998tight}}\label{alg:expoSearch}
	\KwData{An oracle gate $\U_f$ taking $\ket{x}\ket{y}$ to $\ket{x}\ket{y\oplus f(x)}$ for a function $f(x):\{0,\dots,N-1\}\rightarrow \{0,1\}$ with unknown number of solutions; $n=\log_2(N)$.}
	\KwResult{If a solution exists, an $n$-bit binary string $x_0$ satisfying $f(x_0)=1$; if no solution exists, the algorithm runs forever.}
	$m \gets 1$\;
	$\gamma \gets 6/5$\;
	success $\gets$ \textcolor{blue}{FALSE}\;
	\While{\emph{success}$\neq$\emph{\textcolor{blue}{TRUE}}}{
		$j \gets Uniform \{0,\cdots,m-1\}$\;
		$\ket{0}^{\otimes n}\ket{1} \longrightarrow \H^{\otimes n+1}\ket{0}^{\otimes n}\ket{1}= \ket{h}\ket{-}$\; 
		$\ket{h}\ket{-} \longrightarrow \G^j \ket{h}\ket{-} = \ket{x}\ket{-}$;\hspace{1em}  \Comment{$j$ Grover iterations from Alg \ref{alg:grover}.}  
		$\ket{x} \longrightarrow x$; \hspace{16em} \Comment{Measure and check.}  
		\uIf{$f(x)=1$}{$x_0 \gets x$\;
			success $\gets$  \textcolor{blue}{TRUE}\;}
		\Else{$m \gets \min\left( \gamma m, \sqrt{N}\right)$;\hspace{4em}  \Comment{Increase $m$ in case of failure.}
	}}
	\Return{$x_0$}\ .
	
	\vspace{0.5em}
\end{algorithm}
\end{minipage}
}
\end{figure}
 
 While Grover's algorithm delivers impressive speedups over classical search algorithms, it has a major weakness.
 Figure \ref{fig:grov} hides the fact that the probability of success for Grover's algorithm is \emph{not} monatonic in the number of iterations.  Running the algorithm for more than $R$ iterations can backfire.  For example, running the algorithm for $\sqrt{2N}$ iterations when $M=1$ results in a probability of success less than $0.095$ \citep{boyer1998tight}.  The non-monotinicity of Grover's algorithm becomes particularly problematic when we do not know the number of solutions $M$. Taking for example $N=2^{20}$, \citet{boyer1998tight} point out that 804 iterations provide an extremely high probability of success when $M=1$ but a one-in-a-million chance of success when $M=4$.  To solve this problem and develop an effective search algorithm when $M$ is unknown, those authors adopt the strategy of focusing on the expected number of iterations before success. In particular, they propose the quantum exponential search algorithm (Algorithm \ref{alg:expoSearch}).  When a solution exists, the algorithm returns a solution with expected total number of Grover iterations bounded above by $\frac{9}{2}\sqrt{N/M}$.  Still better, this upper bound reduces to $\frac{9}{4}\sqrt{N/M}$ for the special case $M\ll N$, and simulations presented in Figure \ref{fig:expo} show that even this bound is large.  Such results come in handy when deciding whether to halt the algorithm's progress if one believes it possible that no solutions exist.  Indeed, this turns out to be useful in the context of quantum minimization (Section \ref{sec:qmin}).

 \begin{figure}[!tp]
 	\centering
 	\includegraphics[width=1\linewidth]{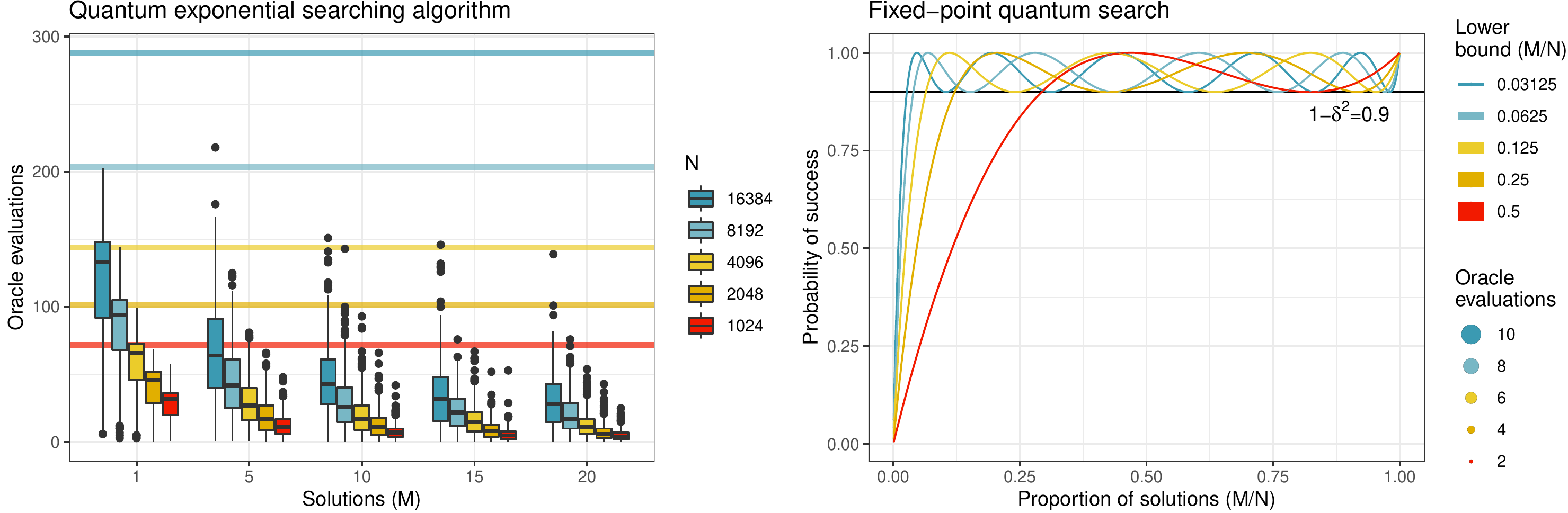}
 	\caption{[Left] Total number of oracle evaluations required by quantum exponential searching algorithm (Algorithm \ref{alg:expoSearch}) for different numbers of solutions $M$ and search set sizes $N$ from 500 independent simulations each.  Horizontal lines at $\frac{9}{4}\sqrt{N}$ represent upper bounds on expected total number of evaluations to obtain a solution for the $M=1$ problem. [Right] Probabilities of success for the fixed-point quantum search algorithm \citep{yoder2014fixed} for different proportions of solutions $\lambda=M/N$ and selecting different lower bounds $w$ on $M/N$ with error tolerance $\delta^2$. }\label{fig:expo}
 \end{figure}

 \begin{figure}[!t]
	\centering
	\includegraphics[width=0.7\linewidth]{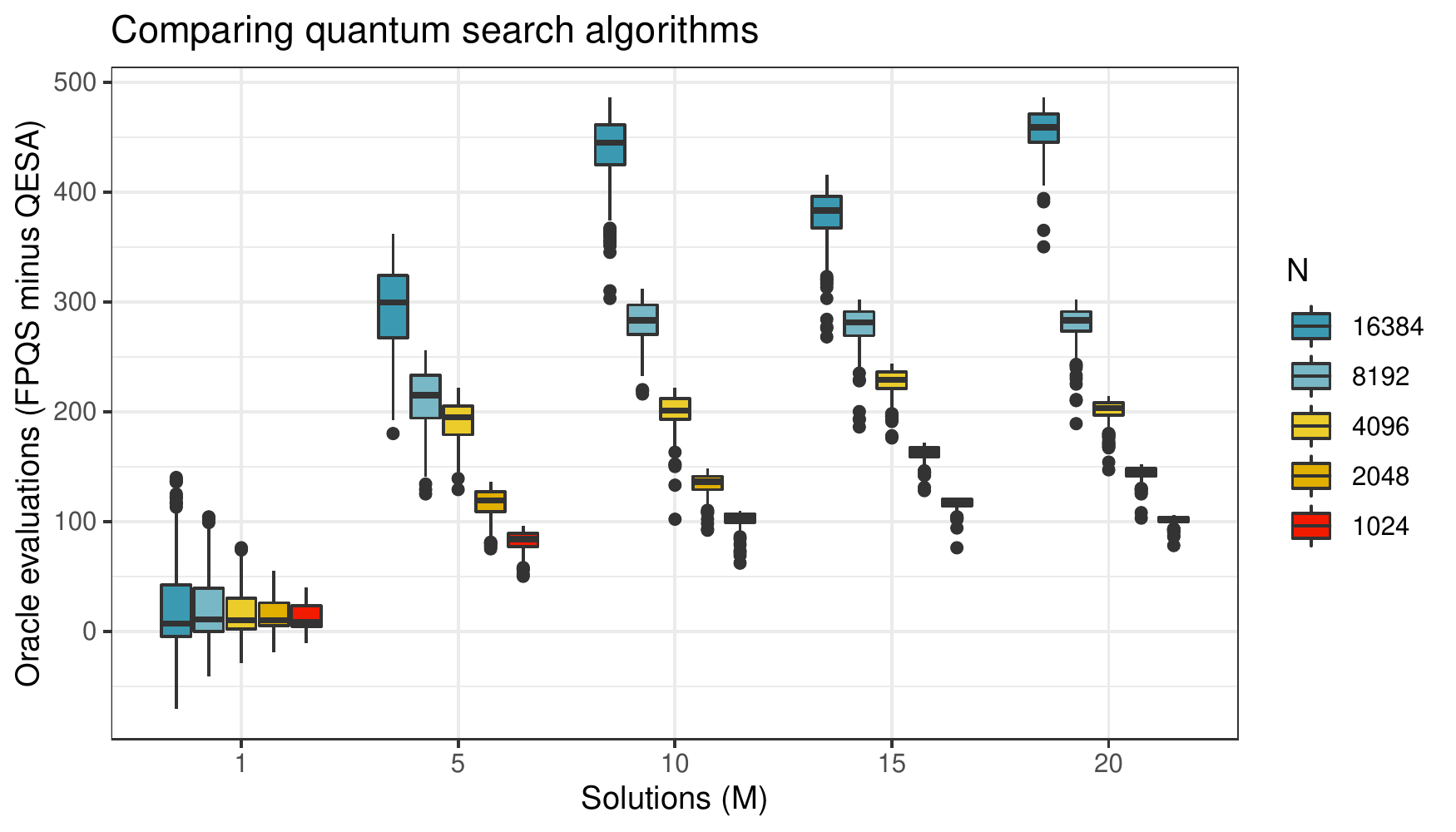}
	\caption{Total number of oracle evaluations required by fixed-point quantum search (FPQS) \citep{yoder2014fixed} minus total required for quantum exponential searching algorithm (QESA) (Algorithm \ref{alg:expoSearch}) for different numbers of solutions $M$ and search set sizes $N$ from 500 independent simulations each.  FPQS underperforms partially due to a miniscule lower bound $w=1/N$ on $\lambda=M/N$, a tuning decision motivated by the potential application within quantum minimization with warm-starting (Section \ref{sec:qmin}).}\label{fig:comp}
\end{figure}

Algorithm \ref{alg:expoSearch} is not the only quantum search algorithm that is useful when the number of solutions $M$ is unknown.  \citet{yoder2014fixed} use \emph{generalized Grover iterations} that extend \eqref{eq:grov} to include general phases $(\alpha_j,\beta_j)$:
\begin{align}\label{eq:grovGen}
	\G(\alpha_j,\beta_j) := - \U_h(\alpha_j) \U_{x_0}(\beta_j) := \Big((1-e^{-i\alpha_j})\ket{h}\bra{h}-\I \Big) \Big( \I - (1-e^{i\beta_j}) \ket{x_0}\bra{x_0} \Big) \, .
\end{align}
This form reduces to the original Grover iteration \eqref{eq:grov} when $\alpha=\pm \pi$ and $\beta=\pm \pi$.
To select general phases, the method first fixes an error tolerance $\delta^2>0$ and a lower bound $w\leq \lambda=M/N$ and then selects an odd upper bound $L$ on the total number of oracle queries ($L-1$) such that 
\begin{align*}
	L \geq \frac{\log(2/\delta)}{\sqrt{w}} \, .
\end{align*}
Finally,  letting $l=(L-1)/2$, one obtains phases $(\alpha_j,\beta_j)$ for $j=1,\dots,l$ that satisfy
\begin{align*}
	\alpha_j = - \beta_{l-j+1} =2 \cot^{-1} \left( \tan(2\pi j /L) \sqrt{1-\gamma^2} \right) \, ,
\end{align*}
 for $\gamma^{-1} =  \cos(\frac{1}{L}\cos^{-1}(\frac{1}{\delta}))$ is the reciprocal of the inverse  $L$th Chebyshev polynomial of the first kind.  Such phases mark the only algorithmic difference between the fixed-point quantum search and the original Grover search (Algorithm \ref{alg:grover}).  The upshot is a search procedure that obtains a guaranteed probability of success $1-\delta^2$ for any $M$ such that there exists an $M_0<M$ that also obtains the same success threshold (Figure \ref{fig:expo}).  On the one hand, this algorithm avoids the exponential quantum search algorithm's need to perform multiple measurements of the quantum state.  On the other hand, the exponential quantum search algorithm requires significantly fewer oracle evaluations when $M$ is small (Figure \ref{fig:comp}), and it turns out that this is precisely the scenario that interests us.
 
 \subsection{Quantum minimization}\label{sec:qmin}

\begin{figure}[!t]
	\centering
	\scalebox{0.9}{
		\begin{minipage}{1\textwidth}
			\begin{algorithm}[H]
 	\vspace{3em}
 	
	\caption{Quantum minimum searching algorithm \citep{durr1996quantum}}\label{alg:min}
	\KwData{A quantum sub-routine capable of evaluating a function $f(\cdot)$ over $\{0,\dots,N-1\}$ with with unique integer values; a maximum error tolerance $\epsilon \in(0,1)$; expected total time to success $m_0=\frac{45}{4}\sqrt{N}+\frac{7}{10}\log_2(N)$.}
	\KwResult{A $\log_2 (N)$-bit binary string $x_0$ satisfying $f(x_0) = \min f$ with probability greater than $1-\epsilon$.}
	$s \gets 0$\;
	$x_0 \gets Uniform\{0,\dots,N-1\}$\;
	\While{$s< m_0/\epsilon$}{
		Prepare initial state $\frac{1}{\sqrt{N}}\sum_x\ket{x}\ket{x_0}$\;
		Mark all items $x$ satisfying $f(x)<f(x_0)$\;
		$s\gets s+ \log_2(N)$\;
		Apply quantum exponential searching algorithm (Alg \ref{alg:expoSearch}); \Comment{$I$ time steps}
		$s\gets s+ I$\;
		Obtain $x'$ by measuring first register\;
		\If{$f(x')<f(x_0)$}{$x_0 \gets x'$}
	}
	\Return{$x_0$} \ .
	\vspace{0.5em}
	
\end{algorithm}
\end{minipage}
}
\end{figure}

Given a function $f(\cdot)$ that maps the discrete set $\{0,\dots,N-1\}$ to the integers, we wish to find the minimizer
\begin{align*}
x_0 = \argmin_{x\in\{0,\dots,N-1\}} f(x) \, .
\end{align*}
 \citet{durr1996quantum} propose a quantum algorithm for finding $x_0$ that iterates between updating a running minimum $F\in f(\{0,\dots,N-1\})$ and applying the quantum exponential search algorithm (Algorithm \ref{alg:expoSearch}) to find an element $x$ such that $f(x)<F$.  Having run these iterations a set number of times, the algorithm returns the minimizer $x_0$ with high probability. 
 To this end, \citet{durr1996quantum} show that their algorithm takes an expected total time less than or equal to $m_0:=\frac{45}{4}\sqrt{N}+\frac{7}{10}\log_2(N)$ to find the minimizer, where marking items with values less than the threshold value (Algorithm \ref{alg:min}) requires $\log_2(N)$ time steps and each Grover iteration within the quantum exponential search algorithm requires one time step. From there, Markov's inequality says
 \begin{align*}
 \mbox{Pr}\left( \mbox{total time to success} \geq \frac{m_0}{\epsilon}\right) \leq \frac{\mbox{E}\left(\mbox{total time to success}\right)}{m_0/\epsilon}\leq \epsilon \, ,
 \end{align*}
 or that we must scale the minimization procedure's time steps by a factor of $1/\epsilon$ to reach a failure rate less than $\epsilon$.  Due to the iterative nature of the algorithm, one might suppose that it is beneficial to start at a value $x_0$ for which $f(x_0)$ is lower relative to other values. This turns out to be the case in theory and in practice, and the benefit of warm-starting is particularly useful in the context of parallel MCMC.

\begin{prop}[Warm-starting]\label{prop:warm}
	Suppose that the quantum minimization algorithm begins with a threshold $F_0$ such that $f(x)<F_0$ for only $K-1>0$ items. Then the expected total number of time steps to find the minimizer is bounded above by
	\begin{align*}
	m_0^K = \left(\frac{5}{4} - \frac{1}{\sqrt{K-1}} \right) 9\sqrt{N} + \frac{7}{10} \log_2(K) \log_2(N) \, ,
	\end{align*}
	and so the following rule relates the warm-started upper bound to the generic upper bound:
	\begin{align*}
	m_0^K = m_0 - 9\sqrt{\frac{N}{K-1}} + \frac{7}{10} \log_2 \left(\frac{K}{N} \right) \log_2(N)\, .
	\end{align*} 
\end{prop}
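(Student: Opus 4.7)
The plan is to adapt the expected-runtime analysis of \citet{durr1996quantum} to the warm-started regime by restricting attention to the ``active'' portion of the search space. Let $T_r$ denote the expected total number of time steps to locate the minimizer when the algorithm's current threshold $F$ satisfies $f(x)<F$ for exactly $r-1$ items (so the current best $x_0$ has rank $r$ from below). The Grover portion of $T_r$ is controlled by the $M\ll N$ behavior of the quantum exponential search algorithm: each invocation within the outer loop of Algorithm \ref{alg:min} costs at most $\frac{9}{2}\sqrt{N/(r-1)}$ Grover iterations in expectation. A short symmetry argument, as in \citet{durr1996quantum}, shows that upon a successful inner search, the new rank is uniformly distributed on $\{1,\dots,r-1\}$, yielding the recursion
\begin{equation*}
T_r \;=\; \frac{9}{2}\sqrt{\frac{N}{r-1}} \;+\; \frac{1}{r-1}\sum_{r'=1}^{r-1} T_{r'}, \qquad T_1 = 0.
\end{equation*}

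First I would telescope this recursion and rewrite $T_r$ as a weighted sum $\sum_{r'=2}^{r} w_{r,r'} \sqrt{N/(r'-1)}$; the weights $w_{r,r'}$ are the hitting probabilities of rank $r'$ under uniform descent from rank $r$. In the original analysis, bounding $\mathbb{E}[T_R]$ with $R\sim\mathrm{Uniform}\{1,\dots,N\}$ and collapsing the resulting double sum against the identity $\sum_{r=1}^{N} 1/\sqrt{r} \le 2\sqrt{N}-1$ produces the $\frac{45}{4}\sqrt{N} = \frac{5}{4}\cdot 9\sqrt{N}$ bound. Under the warm-start hypothesis, the algorithm instead enters the loop deterministically at rank $K$ (since $F_0$ admits exactly $K-1$ strict improvers), so the relevant quantity is $T_K$ itself, not the uniform average. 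Repeating the Dürr--Høyer sum manipulation with the upper limit replaced by $K$ and using the tail estimate $\sum_{r=K}^{N}1/\sqrt{r} \le 2\sqrt{N} - 2\sqrt{K-1}$ pulls out exactly the correction $-9\sqrt{N/(K-1)}$, yielding the coefficient $(\frac{5}{4}-\frac{1}{\sqrt{K-1}})9\sqrt{N}$.

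Next I would account for the marking overhead. Each time the outer loop updates its threshold it spends $\log_2(N)$ time steps preparing the indicator that marks items with $f(x)<F$. Starting from rank $K$ under uniform descent to rank $1$, the expected number of threshold updates is the expected number of left-to-right minima in a random permutation of $K$ elements, namely the harmonic number $H_K = \Theta(\log_2 K)$; combined with the $\log_2 N$ per-update cost this contributes the $\frac{7}{10}\log_2(K)\log_2(N)$ term. The rearranged ``rule'' in the second display follows by subtracting the $K=N$ specialization (the original $m_0$) from the warm-started bound and simplifying.

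The main obstacle is bookkeeping the constants. The original Dürr--Høyer bound is tight precisely because it pairs the $\frac{9}{2}\sqrt{N/M}$ exponential-search bound with a careful harmonic-sum telescoping; in the truncated sum the boundary term must be identified with $9\sqrt{N/(K-1)}$ without slack, which is what pins down the coefficient $\tfrac{1}{\sqrt{K-1}}$ rather than something larger. A secondary subtlety is that the quantum exponential search still queries over the full space of size $N$ (so $\sqrt{N}$, not $\sqrt{K}$, remains inside the radicals); the warm-start only reduces the \emph{number} of descent steps and the \emph{density} of solutions at each step, not the ambient dimension searched.
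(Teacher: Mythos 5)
Your overall strategy coincides with the paper's: both reduce to Lemma~2 of \citet{durr1996quantum}, write the expected Grover cost as $\sum_r p(K,r)\cdot\frac{9}{2}\sqrt{N/(r-1)}$ with visit probabilities $p(K,r)=1/r$, truncate the sum at rank $K$, and charge the marking overhead as (expected number of threshold updates) $\times \log_2 N$, which is where $\frac{7}{10}\log_2(K)\log_2(N)$ comes from. However, the key analytic step is misidentified. The sum to be bounded is
\begin{align*}
\sum_{r=2}^{K}\frac{1}{r}\,\frac{9}{2}\sqrt{\frac{N}{r-1}}
\;=\; \frac{9}{2}\sqrt{N}\sum_{r=1}^{K-1}\frac{1}{(r+1)\sqrt{r}}
\;\le\; \frac{9}{2}\sqrt{N}\left(\frac{1}{2}+\int_{1}^{K-1}t^{-3/2}\,\dd t\right)
\;=\; 9\sqrt{N}\left(\frac{5}{4}-\frac{1}{\sqrt{K-1}}\right) ,
\end{align*}
i.e., a truncation of the \emph{convergent} series $\sum_r r^{-3/2}$. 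The identity $\sum_{r=1}^{N}1/\sqrt{r}\le 2\sqrt{N}-1$ and the tail estimate $\sum_{r=K}^{N}1/\sqrt{r}\le 2\sqrt{N}-2\sqrt{K-1}$ that you cite concern the divergent $r^{-1/2}$ series and do not enter the argument at all; it is precisely the extra factor $1/r$ from the visit probabilities that collapses $\order{\sqrt{NK}}$ to $\order{\sqrt{N}}$, and the correction $-9\sqrt{N/(K-1)}$ is the tail $\frac{9}{2}\sqrt{N}\int_{K-1}^{\infty}t^{-3/2}\,\dd t$, not a boundary term of $\sum 1/\sqrt{r}$. As written, your estimate does not produce the stated constants.

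A second, subtler discrepancy: your recursion charges the first exponential search, costing $\frac{9}{2}\sqrt{N/(K-1)}$, with probability one (visit probability $1$ at rank $K$), whereas the paper imports D\"urr--H{\o}yer's Lemma~1 wholesale and weights the rank-$K$ term by $p(K,K)=1/K$, as though the warm start were uniform over the $K$ best items. The two accountings differ by $\frac{K-1}{K}\cdot\frac{9}{2}\sqrt{N/(K-1)}$, and this matters: at $K=2$ your $T_2=\frac{9}{2}\sqrt{N}$ already exceeds the claimed $m_0^2=\frac{9}{4}\sqrt{N}+\frac{7}{10}\log_2(N)$. So the deterministic-start recursion, telescoped faithfully, yields a strictly weaker bound than the proposition asserts; to land exactly on $m_0^K$ you must adopt the paper's convention for the initial rank. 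Your treatment of the harmonic-number count of threshold updates and the derivation of the second display by subtracting the $K=N$ case both match the paper.
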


\begin{proof}
	The proof follows the exact same form as Lemma 2 of \citet{durr1996quantum}. It relies on a theoretical algorithm called the \emph{infinite algorithm} that runs until the minimum is found.  In this case, Lemma 1 of that paper says that the probability that the $r$th lowest value is ever selected when searching among $K$ items is $p(K,r)=1/r$ for $r\leq K$ and 0 otherwise.  For a warm-start at element $K$, the expected total time spent in the exponential search algorithm is
	\begin{align*}
	\sum_{r=2}^N p(K,r) \frac{9}{2} \sqrt{\frac{N}{r-1}} &= \sum_{r=2}^K p(K,r) \frac{9}{2} \sqrt{\frac{N}{r-1}} 
	=  \frac{9}{2}\sqrt{N}\sum_{r=1}^{K-1} \frac{1}{r+1} \frac{1}{\sqrt{r}} \\
	&\leq \frac{9}{2}\sqrt{N}\left( \frac{1}{2}+\sum_{r=2}^{K-1} r^{-3/2} \right) \leq \frac{9}{2}\sqrt{N}\left( \frac{1}{2}+\int_{r=1}^{K-1} r^{-3/2} \dd r \right) \\
	&= \left(\frac{5}{4} - \frac{1}{\sqrt{K-1}} \right) 9\sqrt{N} \, .
	\end{align*}
	An upper bound for the expected total number of time steps preparing the initial state and marking items $f(x)<f(x_0)$ follows in a similar manner.
\end{proof}

Proposition \ref{prop:warm} shows that, e.g., if Algorithm \ref{alg:min} begins at the item with second lowest value, then the expected total time to success is bounded above by $m_0^2=\frac{9}{4}\sqrt{N}+\frac{7}{10}\log_2(N)$, reducing the generic upper bound by $9\sqrt{N}-\frac{7}{10}\log_2\left( \frac{2}{N}\right)\log_2(N)$ time steps.  When $N$ equals 1000, say, the expected total time steps is $m_0^2< 78.2$.  Keeping $N=1000$ but letting the algorithm begin at the third lowest value ($K=3$), the expected total time steps before success is $m_0^3<165.6$. Raising $N$ to 10,000, the two numbers increase to $m_0^2< 234.4$ and $m_0^3<503.4$.  

In practice, the number of time steps before finding the minimum is surprisingly small.  Figure \ref{fig:qMinAlg} corroborates the intuition that it is beneficial to start at a lower ranked element. To generate these results, we use an early stopping rule for the quantum minimization algorithm's exponential searching sub-routine, halting it after $\frac{9}{4}\sqrt{N}$ iterations. Even with this stopping rule in place, the error rate of the quantum minimization algorithm is less than $1\%$. We can increase the early stopping threshold to obtain even lower error rates, but this strategy would seem to be unnecessary in the context of parallel MCMC.  The benefits of warm-starting are useful in the same context, when the current state $\ttheta^{(s)}$ usually inhabits the high-density region of the target distribution but the majority of proposals do not.  

 \begin{figure}[!t]
	\centering
	\includegraphics[width=0.7\linewidth]{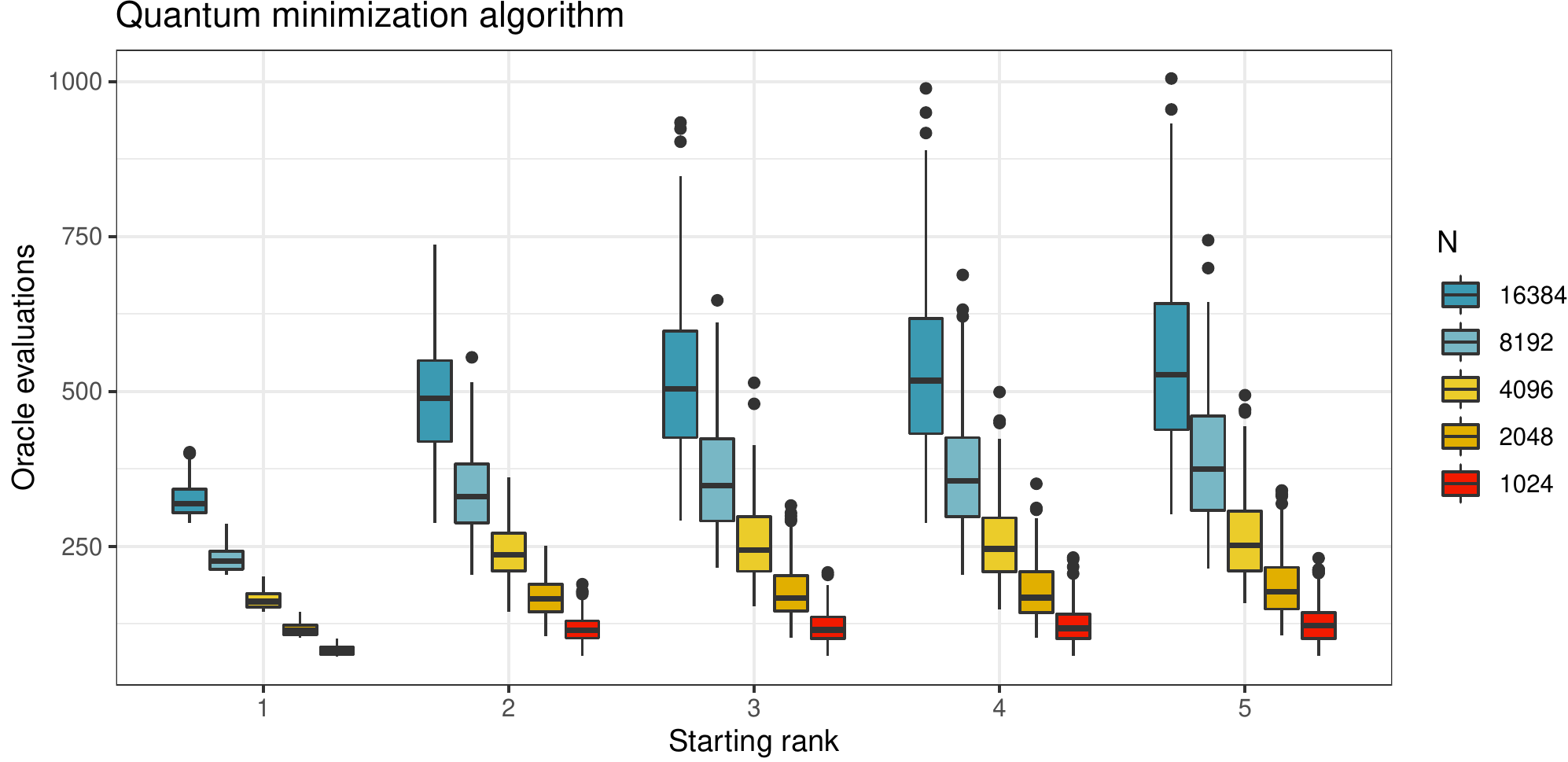}
	\caption{Total number of oracle evaluations required by quantum minimization algorithm \citep{durr1996quantum} for different starting ranks and search set sizes $N$ from 500 independent simulations each.  Here, less than 1\% of the 12,500 instances fail to recover the true minimum on account of early stopping.}\label{fig:qMinAlg}
\end{figure}

\section{Quantum parallel MCMC}\label{sec:qpmcmc}

With the rudiments of quantum minimization in hand, we present a quantum parallel MCMC (QPMCMC). The general structure of the algorithm the same as Algorithm \ref{alg:pMCMC}: it constructs a Markov chain by iterating between generating many proposals and randomly selecting new Markov chain states among these proposals.  Unlike classical single-proposal MCMC, parallel MCMC proposes many states and chooses one according to the generalized acceptance probabilities of \eqref{eq:probs}.
 In general MCMC, evaluation of the target density function $\pi(\cdot)$ is the rate-limiting computational step, becoming particularly onerous in big data scenarios \citep{massive}.  While parallel MCMC benefits from improved mixing as a function of MCMC iterations, the increased computational burden of $P$ target evaluations at each step can lead to less favorable comparisons when accounting for elapsed wall-clock time.
 
Having successfully generated proposals $\ttheta_p,$ $p=1,\dots,P$ and evaluated the corresponding proposal densities $q(\cdot,\cdot)$, we would like to use quantum parallelism and an appropriate oracle gate to efficiently compute the $\pi(\ttheta_p)$s but immediately encounter a Catch-22 when we seek to draw a sample $\ttheta^{(s+1)} \sim$ \emph{Discrete}$(\ppi)$ for $\ppi$ the vector of probabilities $\pi_0,\pi_1,\dots,\pi_P$ defined in \eqref{eq:probs}.  We can use neither a quantum nor a classical circuit to draw the sample! On the one hand, drawing the sample within a quantum circuit would somehow require that all superposed states have access to all the $\pi(\ttheta_p)$s at once to perform the required normalization.  On the other hand, drawing the sample within the classical circuit would require access to all the $\pi(\ttheta_p)$s, but measurement in the computational basis can only return one. 
 
 In light of this dilemma, we propose to use the Gumbel-max trick \citep{papandreou2011perturb} to transform the generalized accept-reject step into a discrete optimization procedure (Section \ref{sec:gm}).  From there, we use quantum minimization to efficiently sample from $Discrete(\ppi)$. Crucially, we get away with quantum parallel evaluation of the target $\pi(\cdot)$ over all superposed states because the Gumbel-max trick does not rely on a normalization step: each superposed state requires no knowledge of any target evaluation other than its own.  

Once one has effectively parallelized the target evaluations $\pi(\ttheta_{p})$, the new computational bottleneck is the calculations required to obtain the $P+1$ proposal density terms $q(\ttheta_{p},\Ttheta_{-p})$, for $p=0,\dots,P$.  Under an independent proposal mechanism with a proposal distribution of the form $q(\ttheta_0,\Ttheta_{-0})=\prod_{p=1}^Pq(\ttheta_0,\ttheta_{p})$, the acceptance probabilities \eqref{eq:probs} require the $\order{P^2}$ evaluation of the $P+1\choose 2$ terms $q(\ttheta_{p},\ttheta_{p'})$.  To avoid such calculations, we follow \citet{holbrook2021generating} and use symmetric proposal distributions for which $q(\ttheta_{p},\Ttheta_{-p}) = q(\ttheta_{p'},\Ttheta_{-p'})$ for $p,p'=0,\dots,P$.

\newcommand{\p}{\mbox{p}}

\begin{figure}[!t]
	\centering
	\scalebox{0.9}{
		\begin{minipage}{1\textwidth}
			\begin{algorithm}[H]
	\caption{The Gumbel-max trick}\label{alg:gm}
	\KwData{A vector of unnormalized log-probabilities $\llambda^*=\log \ppi +\log c$, for $\ppi$ a discrete probability vector with $P+1$ elements.}
	\KwResult{A single sample $\hat{p} \sim$ \emph{Discrete}$(\ppi)$ satisfying $\hat{p}\in\{0,1,\dots,P\}$.}
	\For{$p \in\{0,1,\dots,P\}$}{
		$z_{p} \gets Gumbel(0,1)$\;
		$\alpha^*_{p} \gets \lambda_{p}^* + z_{p}$\;
	}
	$\hat{p} \gets \argmax_{p=0,\dots,P}\alpha^*_{p}$\;
	\Return{$\hat{p}\,$}.
	
	\vspace{0.5em}
\end{algorithm}
\end{minipage}
}
\end{figure}

\subsection{Simplified acceptance probabilities}\label{sec:sym}

\newcommand{\SSigma}{\boldsymbol{\Sigma}}

Evaluation of the $P+1$ proposal density terms $q(\ttheta_{p},\Ttheta_{-p})$ can be computationally burdensome as $P$ grows large.  \citet{holbrook2021generating} proves that two specific proposal mechanisms enforce equality across all $P+1$ terms and thus enable the simplified acceptance probabilities
\begin{align}\label{eq:probsSimp}
	\pi_p = \frac{\pi(\ttheta_p) }{\sum_{p'=0}^P \pi(\ttheta_{p'})} \, , \quad p=0,\dots,P \, .
\end{align}
In Section \ref{sec:continu}, we opt for one of these proposal mechanisms, the centered Gaussian proposal of \citet{tjelmeland2004using}.  This strategy first generates a Gaussian random variable $\bar{\ttheta}$ centered at the current state $\ttheta_{0}$ and then generates $P$ proposals centered at $\bar{\ttheta}$:
\begin{align}\label{eq:p1}
	\ttheta_1,\dots,\ttheta_P \stackrel{\perp}{\sim} Normal_D(\bar{\ttheta},\SSigma) \, , \quad \bar{\ttheta} \sim Normal_D(\ttheta_{0},\SSigma) \, .
\end{align}
\citet{holbrook2021generating} shows that this strategy leads to the higher-order proposal symmetry
\begin{align}\label{eq:symmetry}
	q(\ttheta_0,\Ttheta_{-0}) = q(\ttheta_{1},\Ttheta_{-1}) =\dots =q(\ttheta_{P},\Ttheta_{-P})  
\end{align}
and that the simplified acceptance probabilities \eqref{eq:probsSimp} maintain detailed balance.  In fact, other location scale families also suffice in the manner, but here we focus on Gaussian proposals without loss of generality.  Finally, the simplicial sampler \citep{holbrook2021generating} accomplishes the same simplified acceptance probabilities but incurs an $\mathcal{O}(D^3)$ computational cost.

One may also use a more limited strategy to enforce \eqref{eq:symmetry}. \citet{glatt} show that the sampler using independence proposal $q(\ttheta_p,\Ttheta_{-p})=\prod_{p'\neq p} q(\ttheta_{p'})$, where $q(\cdot)$ is not a function of $\ttheta_p$, results in acceptance probabilities
\begin{align}\label{eq:indepProb}
	\pi_p = \frac{\pi(\ttheta_p) /q(\ttheta_p)}{\sum_{p'=0}^P \pi(\ttheta_{p'})/q(\ttheta_{p'})} \, , \quad p=0,\dots,P \, .
\end{align}
When $\pi(\cdot)$ and $q(\cdot)$ take continuous values on a topologically compact domain or discrete values on a finite domain, one may specify $q(\cdot)$ to be uniform and let \eqref{eq:indepProb} simplify to \eqref{eq:probsSimp}.  This strategy proves useful in Section \ref{sec:discrete}, in which we consider discrete-valued targets over Ising-type lattice models.

\subsection{Continuously-valued targets}\label{sec:continu}

\begin{figure}[!t]
	\centering
	\scalebox{0.9}{
		\begin{minipage}{1\textwidth}
			\begin{algorithm}[H]
	\caption{Quantum parallel MCMC for a continuously-valued target}\label{alg:qpMCMC}
	\KwData{Initial Markov chain state $\ttheta^{(0)}$; total length of Markov chain $S$; total number of proposals per iteration $P$; routine for evaluating target density $\pi(\cdot)$; routines for drawing random samples from a $D$-dimensional multivariate Gaussian $Normal_D(\boldsymbol{\mu},\SSigma)$ and the standard Gumbel distribution $Gumbel(0,1)$.}
	\KwResult{A Markov chain $\ttheta^{(1)}, \dots, \ttheta^{(S)}$.}
	\For{$s \in\{1,\dots,S\}$}{
		$\ttheta_0 \gets \ttheta^{(s-1)}$\;
		$z_{0} \gets Gumbel(0,1)$\;
		$\bar{\ttheta} \gets Normal_D(\ttheta_0,\SSigma)$; \hspace{14em} \Comment{ \hspace{0.3em}     Proposal \eqref{eq:p1}}      
		\For{$p \in \{1,\dots,P\}$}{
			$\ttheta_p \gets Normal_D(\bar{\ttheta},\SSigma)$\;
			$z_{p} \gets Gumbel(0,1)$\;
		\HiLi	$\ket{\theta_p} \gets \ttheta_p$; \hspace{5em} \Comment{      Load proposal onto quantum computer.}      
		}
	\HiLi	$\hat{p} \gets \argmin_{p=0,\dots,P}\Big(f(p):= -\big(z_{p}+ \log\pi(\ttheta_{p}) \big)\Big)$\;  \HiLi \hspace{10em}\Comment{\hspace{0.5em} $\order{\sqrt{P}}$ Alg \ref{alg:min} with warm-start at $p=0$.}
		$\ttheta^{(s)} \gets \ttheta_{\hat{p}}$\;
	}
	\Return{$\ttheta^{(1)}, \dots, \ttheta^{(S)}$}\ .
	
	\vspace{0.5em}
\end{algorithm}
\end{minipage}
}
\end{figure}

 \begin{figure}[!t]
 	\vspace{3em}
	\centering
	\includegraphics[width=0.9\linewidth]{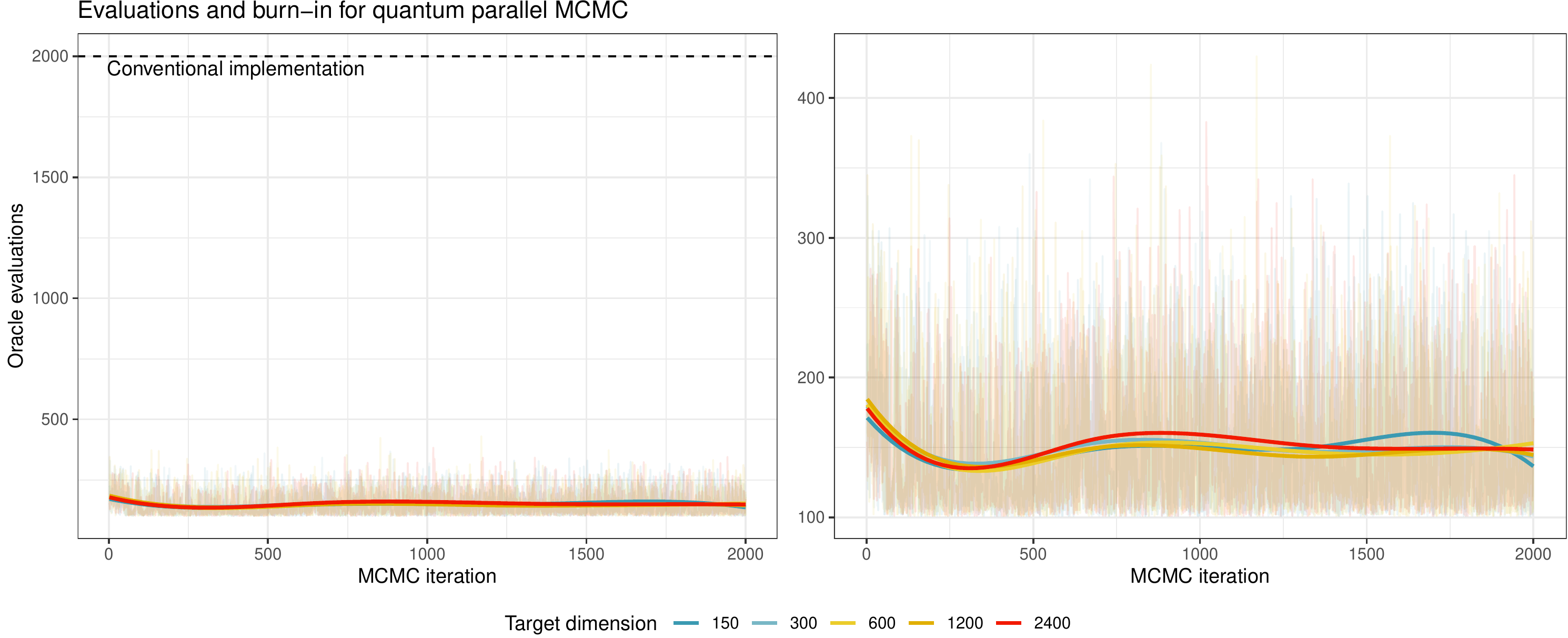}
	\caption{Total number of oracle evaluations required for each of 2,000 quantum parallel MCMC (QPMCMC) iterations for standard multivariate normal targets of five different dimensionalities. Regardless of target dimension, the individual QPMCMC runs require roughly $7$\% of the usual 4 million target evaluations. Over  99.4\% of the 10,000 MCMC iterations across all dimensions successfully sample from the discrete distribution with probabilities of \eqref{eq:probs}. Burn-in iterations require moderately more evaluations because the current state occupies a lower density region and represents a `less good' warm-start.}\label{fig:mcmcIterations}
\end{figure}

 \begin{figure}[ph!]
	\centering
	\includegraphics[width=0.5\linewidth]{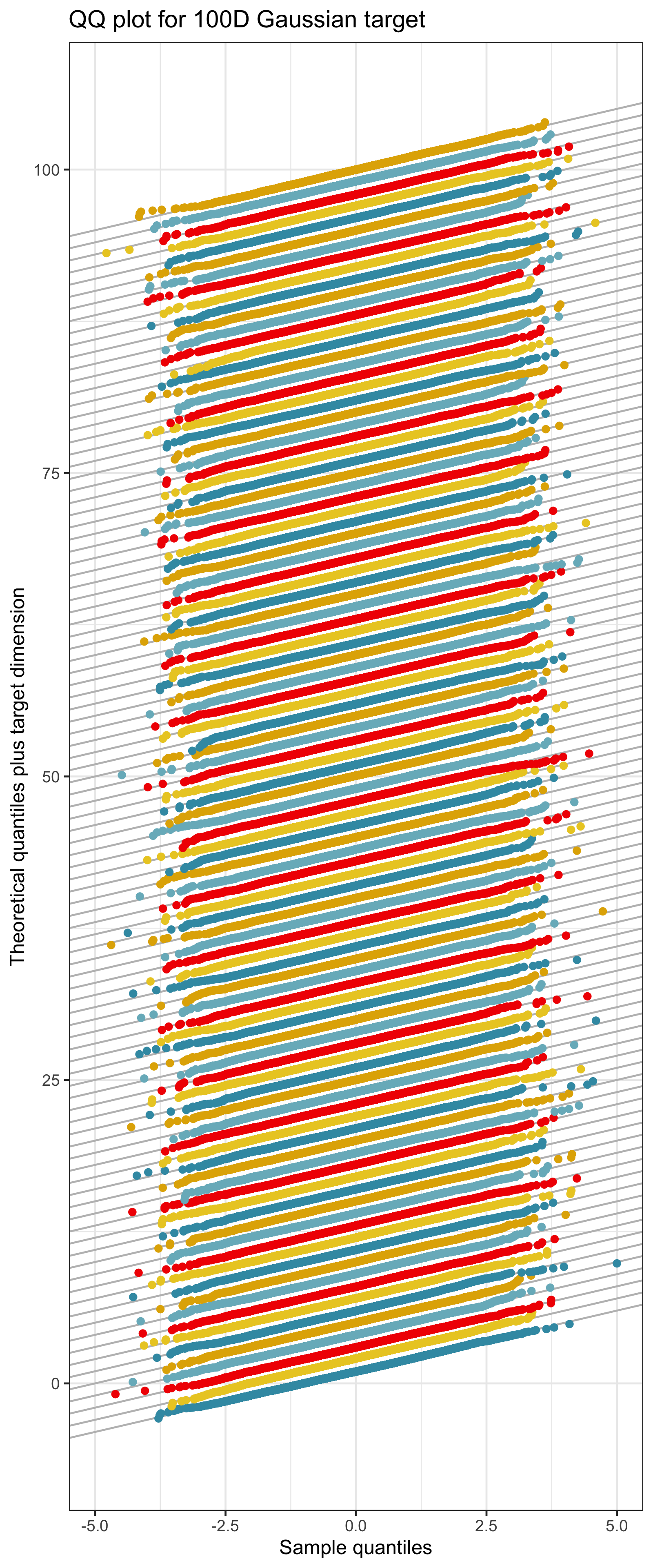}
	\caption{Empirical accuracy of quantum parallel MCMC (QPMCMC) for a 100 dimensional spherical Gaussian target.  We generate 100,000 samples using 2,000 proposals each iteration and remove the first 2,000 samples as burn-in.  The QQ (quantile-quantile) plot shows sample quantiles adhering closely to the theoretical quantiles.  Similar to the independent simulation shown in Figure \ref{fig:mcmcIterations}, here QPMCMC requires less than $7.2$\% of the usual number of target evaluations.}\label{fig:qq}
\end{figure}

Algorithm \ref{alg:qpMCMC} presents the details of QPMCMC for a continuously-valued target.  The algorithm uses conventional (perhaps parallel) computing almost entirely, excluding two lines that are highlighted.  The first of these loads proposals onto the quantum machine, and the second line that calls the quantum minimization algorithm presented in Algorithm \ref{alg:min}. 
This sub-routine seeks to find the minimizer of the function
\begin{align}\label{eq:objective}
f(p)= -\big(z_{p}+ \log\pi(\ttheta_{p}) \big) \quad \mbox{over} \quad p=0,1,\dots, P,
\end{align} 
which combines the numerators of \eqref{eq:probs}, the Gumbel-max trick and a trivial negation.  As discussed in Section \ref{sec:qmin}, the sub-routine requires only $\mathcal{O}(\sqrt{P})$ oracle evaluations and, therefore, only $\mathcal{O}(\sqrt{P})$ evaluations of the target density $\pi(\cdot)$ using a quantum circuit.  In theory, a quantum computer can perform the same computations as a classical computer, but the efficiency of the target evaluations would depend on a number of factors related to the structure of the density itself, the availability of fast quantum random access memory \citep{giovannetti2008quantum} and the ability of large numbers of quantum gates to act in concert with negligible noise \citep{kielpinski2002architecture,erhard2019characterizing}.

In general MCMC, one often calibrates the scaling of a proposal kernel to balance between exploring the target's domain and remaining within high-density regions.  Optimal scaling strategies may lead to a large number of rejected proposals \citep{roberts2001optimal}.  Indeed, \citet{holbrook2021generating} shows that parallel MCMC algorithms are no exception. When using the Gumbel-max trick to sample from proposals, this means that the current state is often quite near optimality, representing a warm-start.  Figure \ref{fig:mcmcIterations} shows how this warm-starting effects the number of oracle calls (and hence target evaluations) within the quantum minimization sub-routine over the course of an MCMC trajectory.  We target multi-dimensional standard normal distributions of differing dimensions using the vector $(100,\dots,100)$ as starting position. We fix the number of Gaussian proposals to be 2,000 and use standard adaptation procedures \citep{rosenthal2011optimal} to target a 50\% acceptance rate while guaranteeing convergence. Although no theoretical results validate this target acceptance rate, the rate is close to empirically optimal rates from \citet{holbrook2021generating}. Across iterations, QPMCMC requires relatively few oracle calls compared to the 2,000 target evaluations of parallel MCMC. We also witness the expected drop in the number of oracle evaluations as the chain approaches the target's high-density region. Remarkably, the algorithm succeeds in sampling from the true discrete distribution in 99.5\% of the MCMC iterations. An independent simulation obtains similar results, requiring roughly 7\% of the usual number of target evaluations. Figure \ref{fig:qq} shows a quantile-quantile plot for all 100 dimensions of a multi-dimensional standard normal distribution. We see no appreciable impact from the rare failure of the quantum minimization sub-routine.


\sloppy
Finally, we compare convergence speed for QPMCMC using 1,000, 5,000 and 10,000 proposals to target a massively multimodal distribution: a mixture of 1,000 2-dimensional, isotropic Gaussians with standard deviations equal to 1 and means equal to $10\times(\boldsymbol{0}, \boldsymbol{1} , \dots , \boldsymbol{999})$. This target is particularly challenging because the distance between modes is significantly larger than standard deviations. In 5 independent simulations, we run each algorithm until it achieves and effective sample size of 100.  Table \ref{tab:res} contains MCMC iterations and target evaluations required to achieve this effective sample size as well as relative speedups over sequential implementations and relative improvements over the 1,000 proposal sampler.

\let\oldtabular\tabular
\let\endoldtabular\endtabular
\renewenvironment{tabular}{\rowcolors{2}{white}{trevorblue!15}\oldtabular}{\endoldtabular}

\newcommand{\ra}[1]{\renewcommand{\arraystretch}{#1}}
\ra{1.2}
\begin{table}[t!]
	\centering
	\resizebox{\textwidth}{!}{
		\begin{tabular}{@{}lllll@{}}\toprule
			Proposals & MCMC iterations  & Target evaluations & Speedup  & Efficiency gain  \\   \midrule
			1,000 & 249,398 (200,998,  311,998) & 24,988,963 (20,149,132, 31,265,011) & 9.98  (9.98, 9.98) & 1  \\   
			5,000 & 14,398 (12,998, 16,998) & 3,314,560 (2,989,418,  3,916,281) & 21.72  (21.70, 21.74) & 7.58  (6.25, 9.71) \\   
			10,000 & 5,998  (4,998,  6,998) & 1,993,484  (1,662,592,  2,330,842) & 30  (29.96, 30.26) & 12.87  (8.64, 18.80) \\    \bottomrule
		\end{tabular}
	}
	\caption{Racing to a minimum (between the two dimensions) of 100 effective samples for a target with 1,000 disjoint modes. `Speedup' is ratio between target evaluations required for sequential and quantum implementations. `Efficiency gain' is ratio between target evaluations required for 1,000 proposal and 5,000/10,000 proposal implementations. We report means (minima, maxima) across 5 independent runs.}\label{tab:res}
\end{table}

\subsection{Discrete-valued targets}\label{sec:discrete}

We wish to sample from a target distribution defined over a discrete set $\{\ttheta_\alpha\}_{\alpha \in \mathcal{A}}$, for $\mathcal{A}$ some finite or countably infinite set of indices.  \citet{glatt} establish the broader measure theoretic foundations for parallel MCMC procedures that use selection probabilities \eqref{eq:probs} to maintain detailed balance. In particular, when the target distribution has probability mass function $\pi(\cdot)$ defined with respect to the counting measure on the power set $2^{\mathcal{A}}$, then detailed balance results in the kernel $P(\cdot,\cdot)$  satisfying
\begin{align*}
	\pi(\alpha) = \sum_{{\alpha'}} \pi(\alpha') P(\alpha',\alpha) \, , \quad \forall \alpha,\alpha' \in \mathcal{A} \, .
\end{align*}
Sections \ref{sec:ising} and \ref{sec:bayesIm} consider targets defined over finite sets $\mathcal{A}$ and make use of the uniform independence proposal scheme described in Section \ref{sec:sym}, thereby facilitating simplified acceptance probabilities \eqref{eq:probsSimp}.  This scheme proposes single-flip updates to the current Markov chain state $\ttheta_0$, but it is worth noting that other multiple-flip schemes would also be amenable to QPMCMC.

\subsubsection{Ising model on a 2-dimensional lattice}\label{sec:ising}

 \begin{figure}[!t]
	\centering
	\includegraphics[width=0.7\linewidth]{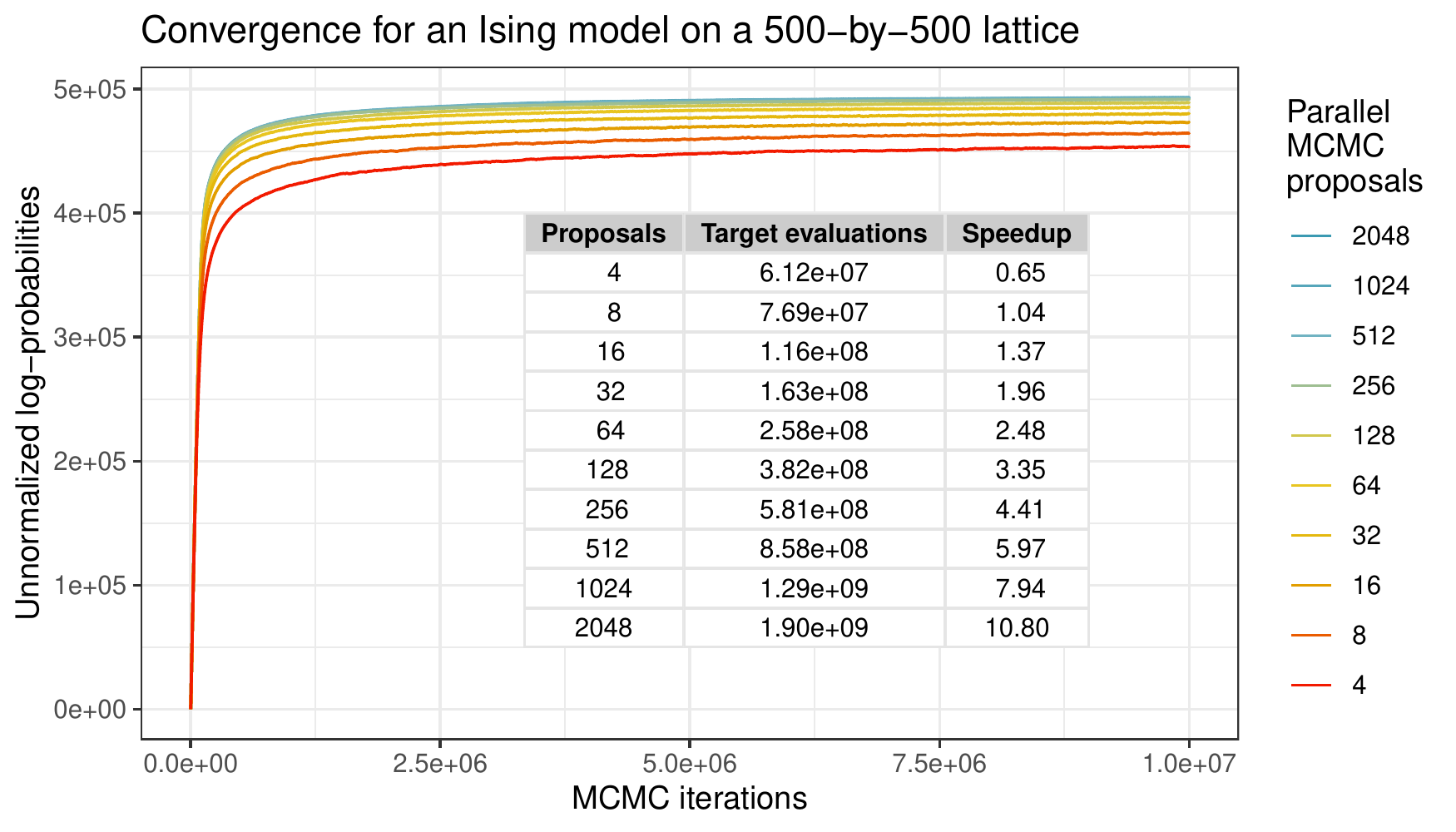}
	\caption{Convergence of log-posterior for a parallel MCMC sampler targeting an Ising model with uniform interaction $\rho=1$ and no external field. All chains begin at minimal probability `checkerboard' state. Larger proposal counts allow discovery of higher probability configurations.}\label{fig:ising2d}
\end{figure}

Here, we are interested in an Ising-type lattice model over configurations $\ttheta=(\theta_1,\dots,\theta_D)$ consisting of $D$ individual spins $\theta_d \in \{-1,1\}$. In terms of the preceding section, we have $\mathcal{A}=\{-1,1\}^D$ and $|\mathcal{A}|=2^D$. In particular, we consider targets of the form
\begin{align}\label{eq:ising}
	\pi (\ttheta | \rho) &\propto \exp \left( \rho \sum_{(d,d')\in \mathcal{E}} \theta_d \theta_{d'} \right)  \, ,
\end{align}
where $\rho>0$ is the interaction term and $\mathcal{E}$ is the lattice edge set.  We specify a two-dimensional 500-by-500 lattice with nearest neighbors connections and fix $\rho=1$.  Since this latter setting encourages neighboring spins to equal each other, we begin our QPMCMC trajectories at the lowest-probability `checkerboard' state for an initial configuration.  At each iteration, we generate collections of proposal states by uniformly flipping $P$ individual spins $\theta_p$, $p \in \{1,\dots,P\}$, and obtaining each proposal state $\ttheta_p$ by updating the current state $\ttheta_0$ at the single location corresponding to $\theta_p$. Figure \ref{fig:ising2d} shows results from 10 independent runs using $P\in \{4, 8, 16, \dots, 2048\}$ proposals.  Trace plots of unnormalized log-probabilities indicate that higher proposal counts enable discovery of higher probability configurations.  Interestingly, QPMCMC appears to be particularly beneficial in this large $P$ context, requiring less than one-tenth the target evaluations required using conventional computing when $P=2048$.  

\subsubsection{Bayesian image analysis}\label{sec:bayesIm}

 \begin{figure}[!t]
	\centering
	\includegraphics[width=0.9\linewidth]{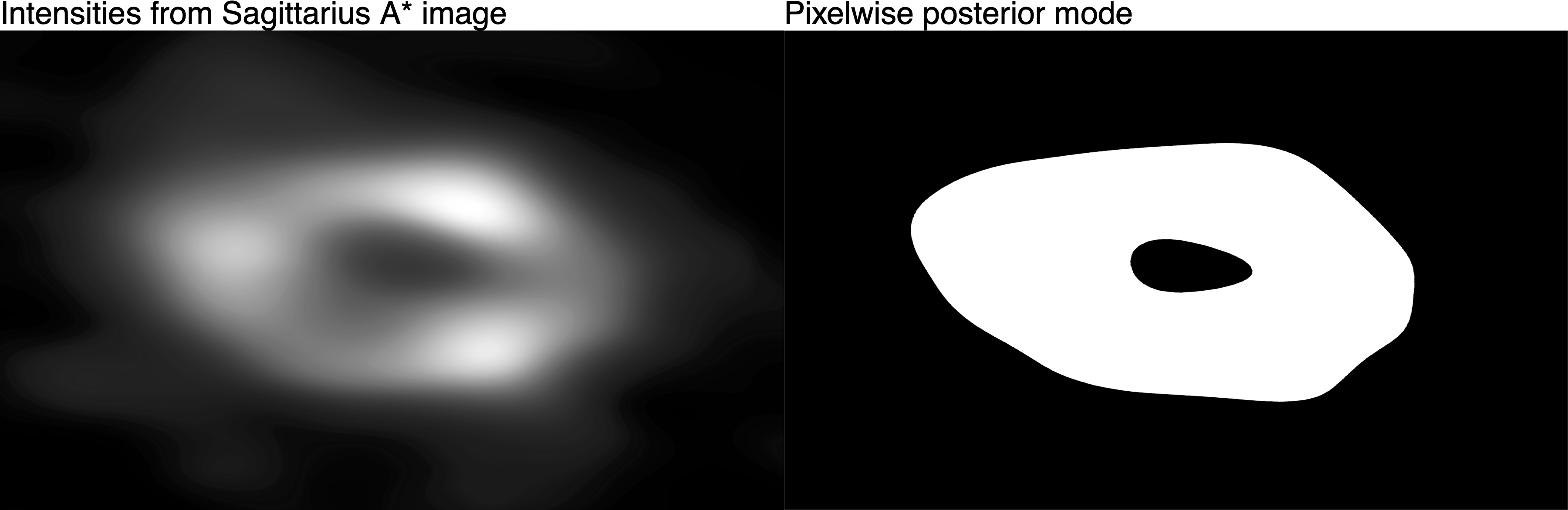}
	\caption{On the left is a 4,076-by-4,076 intensity map of the shadow of supermassive black hole Sagittarius A*  \citep{akiyama2022first}.  On the right is the pixelwise posterior mode of a Bayesian image classification model fit to intensity data.  Within a Metropolis-in-Gibbs routine, quantum parallel MCMC using 1,024 proposals requires less that one-tenth the posterior evaluations required by conventional parallel MCMC.}\label{fig:blackHole}
\end{figure}

We apply a Bayesian image classification model to an intensity map (Figure \ref{fig:blackHole}) of the newly imaged supermassive black hole, Sagittarius A*, located at the Galactic Center of the Milky Way \citep{akiyama2022first}.  Whereas one cannot see the black hole itself, one may see the shadow of the black hole cast by the hot, swirling cloud of gas surrounding it.  We extract the black hole from the surrounding cloud by modeling the intensity at each of the $D=$4,076$^2=$16,613,776 pixels as belonging to a mixture of two truncated normals with values $y_{d}$ restricted to the intensity range $[0,255]$.  Namely, we follow \citet{hurn1997difficulties} and specify the latent variable model
\begin{align*}
	y_d |(\mu_\ell, \sigma^2, \theta_d) &\stackrel{ind}{\sim} \mbox{Normal}(\mu_\ell,\sigma^2 ) \, ,\quad y_d \in [0,255] \, , \quad \theta_d=\ell \,,\quad d\in \{1,\dots,D\}\, ,\\
	\mu_\ell &\stackrel{iid}{\sim} \mbox{Uniform} (0,255) \,  , \quad \ell \in \{-1,1\} \, , \\
	\frac{1}{\sigma^2} &\sim  \mbox{Gamma}\left(\frac{1}{2}, \frac{1}{2} \right)  \, ,
\end{align*}
where $\ttheta=(\theta_1,\dots,\theta_D)$ share for a prior the Ising model \eqref{eq:ising} with edges joining neighboring pixels and interaction $\rho=1.2$. 

We use a QPMCMC-within-Gibbs scheme to infer the join posterior of $\ttheta$ and the three mixture model parameters.  For the former, we use the same QPMCMC scheme as in Section \ref{sec:ising} with 1,024 proposals at each iteration. For the latter, we use the closed-form updates presented in \citet{hurn1997difficulties}.  We run this scheme for 20 million iterations, discarding the first 10 million as burnin.  We thin the remaining sample at a ratio of 1 to 40,000 for the latent memberships $\ttheta$ and 1 to 4,000 for the three parameters $\mu_{-1}$, $\mu_1$ and $\sigma^2$.  Using the \textsc{R} package \textsc{coda} \citep{coda}, we calculate effective sample sizes of the log-posterior (257.1), $\mu_{-1}$ (1,578.7), $\mu_1$ (257.6) and $\sigma^2$ (2,500.0), suggesting adequate convergence.   Figure \ref{fig:blackHole} shows both the intensity data and the pixelwise posterior mode of the latent membership vector $\ttheta$.  The QPMCMC requires only 1,977,553,608 target evaluations compared to the 1,024 $\times$ 20,000,000 $=$ 2.048$\times10^{10}$ evaluations required for the analogous parallel MCMC scheme implemented on a conventional computer, representing a 10.36-fold speedup.

\section{Discussion}\label{sec:disc}

We have shown that parallel MCMC enjoys quadratic speedups by combining quantum minimization with the Gumbel-max trick.  Within a QPMCMC iteration, the current state represents a warm-start for the sampling-turned-optimization problem, leading to increased efficiencies for the quantum minimization algorithm.   Moreover, combining this approach with the Tjelmeland correction \citep{holbrook2021generating} results in a total complexity reduction from $\mathcal{O}(P^2)$ to $\mathcal{O}(\sqrt{P})$.  Preliminary evidence suggests that our strategy may find use when target distributions exhibit extreme multimodality.  While the algorithm still must construct long Markov chains to reach equilibrium, generating each individual Markov chain state requires significantly fewer target evaluations.

There are major technical barriers to the practical implementation of QPMCMC.  The framework, like other quantum machine learning (QML) schemes, requires on-loading and off-loading classical data to and from a quantum machine.  In the seminal review of QML, \citet{biamonte2017quantum} discuss what they call the `input problem'.  \citet{cortese2018loading} present a general purpose quantum circuit for loading $B$ classical bits into a quantum data structure comprising $\log_2 (B)$ qubits with circuit depth $\order{\log (B)}$.  For continuous targets, QPMCMC requires reading $\order{P}$ real vectors $\ttheta_p\in \mathbb{R}^D$ onto the quantum machine at each MCMC iteration.  If $b$ is the number of bits used to represent a single real value, then reading $B=\order{PDb}$ classical bits into the quantum machine requires time $\order{\log (PDb)}$.  This is true whether one opts for fixed-point \citep{jordan2005fast} or floating-point \citep{haener2018quantum} representations for real values.  The outlook can be better for discrete distributions.  For example, the QPMCMC scheme presented in Section \ref{sec:disc} only requires loading the entire configuration state once prior to sampling.  A $D$-spin Ising model requires $D$ classical bits to encode, and these bits load onto a quantum machine in time $\order{\log(D)}$.  Once one has encoded the entire system state, each QPMCMC iteration only requires loading the addresses of the $\order{P}$ proposal states. If one uses $b$ bits to encode each address, then the total time required to load data onto the quantum machine is $\order{\log(Pb)}$ for each QPMCMC iteration.  On the other hand, the speedup for discrete targets assumes the ability to hold an entire configuration in QRAM.
Conveniently, the `output problem' is less of an issue for QPMCMC, as only a single integer $\hat{p} \in \{0,\dots,P\}$ need be extracted within Algorithm \ref{alg:min}.

This work leads to three interesting questions.   First, what is the status of inference obtained by QPMCMC?  The QPMCMC selection step relies on quantum minimization, an algorithm that only achieves success with probability $1-\epsilon$.  While empirical studies suggest that this error induces little bias, it would be helpful to use this $\epsilon$ to bound the distance between the target distribution and the distribution generated by QPMCMC.  Such theoretical efforts would need to extend recent formalizations of the multiproposal based parallel MCMC paradigm \citep{glatt} to account for biased MCMC kernels.

Second, can QPMCMC be combined with established quantum algorithms that make use of `quantum walks' on graphs to sample from discrete target distributions?  \citet{szegedy2004quantum} presents a quantum analog to classical ergodic reversible Markov chains and shows that such quantum walks sometimes provide quadratic speedups over classical Markov chains.  \citet{szegedy2004quantum} also points out that Grover's search, a key component within QPMCMC, may be interpreted as performing just such a quantum walk on a complete graph.  \citet{wocjan2008speedup} accelerate the quantum walk by using ancillary Markov chains to improve mixing and apply their method to simulated annealing.  Given a quantum algorithm $\mathbb{A}$ for producing a discrete random sample with variance $\sigma^2$, \citet{montanaro} develops a quantum algorithm for estimating the mean of algorithm $\mathbb{A}$'s output with error $\epsilon$ by running algorithm $\mathbb{A}$ a mere $\widetilde{\mathcal{O}}(\sigma/\epsilon)$ times, where $\widetilde{\mathcal{O}}$ hides polylogarithmic terms. Importantly, this quadratic quantum speedup over classical Monte Carlo applies for quantum algorithms $\mathbb{A}$ that only feature a single measurement such as certain simple quantum walk algorithms. Unfortunately, this assumption fails for quantizations of Metropolis-Hastings, in general, and QPMCMC, in particular. More promising for QPMCMC, \citet{lemieux2020efficient} develop a quantum circuit that applies Metropolis-Hastings to the Ising model without the need for oracle calls.  An interesting question is whether similar non-oracular quantum circuits exist for the basic parallel MCMC backbone to QPMCMC.  In general, however, comparison between QPMCMC and other quantum Monte Carlo techniques is challenging because the foregoing literature (\textcolor{red}{a}) largely focuses on MCMC as a tool for discrete optimization, with algorithms that only ever return a single Monte Carlo sample or function thereof, and (\textcolor{red}{b}) restricts itself to a handfull of simple, stylized and discrete target distributions.  On the other hand, QPMCMC is a general inferential framework for sampling from general discrete and continuous distributions alike.

Third, there are other models and algorithms in statistics, machine learning and statistical mechanics that require sampling from potentially costly discrete distributions.   Can one adapt  our quantum Gumbel-max trick to these?  Approximate Bayesian computation \citep{csillery2010approximate} extends probabilistic inference to contexts within which one only has access to a complex, perhaps computationally intensive, forward model.  Having sampled model parameters from the prior, one accepts or rejects these parameter values based on the discrepancy between the observed and simulated data.  If one succeeds in embedding forward model dynamics within a quantum circuit, then one may plausibly select from many parameter values using our quantum Gumbel-max trick.  The trick may also find use within sequential Monte Carlo \citep{doucet2001sequential}.  For example, \citet{berzuini2001resample} present a sequential importance resampling algorithm that uses MCMC-type moves to encourage particle diversity and avoid the need for bootstrap resampling.  Multiproposals accelerated by the quantum Gumbel-max trick could add speed and robustness to such MCMC-resampling.

Large-scale, practical quantum computing is still a long way off, but quantum algorithms are ripe for mainstream computational statistics.

\section*{Acknowledgments}

This work was supported by grants NIH K25 AI153816, NSF DMS 2152774 and NSF DMS 2236854.

\appendix

\section{Limited introduction to quantum computing}\label{sec:shortIntro}

Quantum computers perform operations on unit-length vectors of complex data called quantum bits or \emph{qubits}.  One may write any qubit $\ppsi$ as a linear combination of the \emph{computational basis states} $\ket{0}$ and $\ket{1}$.  In symbols,
\begin{align*}
	\ket{\psi} = \alpha \ket{0} + \beta \ket{1} \quad \mbox{for} \quad \alpha, \, \beta \in \mathbb{C} \quad \mbox{and} \quad  |\alpha|^2 + |\beta|^2 = 1\, .
\end{align*}
This formula uses \emph{Dirac} or \emph{bra-ket} notation and obscures ideas that are commonplace in the realm of applied statistics. We make the unit-vector specification of $\ket{\psi}$ clear by writing
\begin{align*}
	\ket{0}=\begin{bmatrix}
		1 \\ 0
	\end{bmatrix} \, ,  \quad \ket{1}=\begin{bmatrix}
		0 \\ 1
	\end{bmatrix} \quad \mbox{or} \quad \ket{\psi} = \alpha \begin{bmatrix}
		1 \\ 0
	\end{bmatrix} + \beta  \begin{bmatrix}
		0 \\ 1
	\end{bmatrix}\, .
\end{align*} 
The full machinery of linear algebra is also available within this notation. The conjugate transpose of $\ket{\psi}$ is $\bra{\psi}$. The inner product of $\ket{\psi}$ and $\ket{\phi}$ is $\braket{\phi|\psi}$. The outer product is $\ket{\psi}\bra{\phi}$. Naturally, we can write $\ppsi$ as a linear combination of any other such basis elements. Consider instead the vectors
\begin{align*}
	\ket{+} = \frac{1}{\sqrt{2}} \ket{0} + \frac{1}{\sqrt{2}} \ket{1} \quad \mbox{and} \quad \ket{-} = \frac{1}{\sqrt{2}} \ket{0} - \frac{1}{\sqrt{2}} \ket{1} \, .
\end{align*}
A little algebra shows that $\ket{+}$ and $\ket{-}$ are indeed unit-length and orthogonal to each other. A little more algebra reveals that, with respect to this basis, $\ppsi$ has coefficients $\alpha'$ and $\beta'$ given by $(\alpha + \beta)/\sqrt{2}$ and $(\alpha - \beta)/\sqrt{2}$, respectively. A last bit of algebra shows that this representation is consistent with $\ppsi$'s unit length.

But linear algebra is not the only aspect of quantum computing that should come easily to applied statisticians. In addition to thinking of $\ppsi$ as a vector, it is also useful to think of $\ppsi$ as a (discrete) probability distribution over the computational basis states $\ket{0}$ and $\ket{1}$. The constraints on coefficients $\alpha$ and $\beta$ mean that we can think of $|\alpha|^2$ and $|\beta|^2$ as probabilities that $\ppsi$ is in state $\ket{0}$ or $\ket{1}$, respectively.  Accordingly, $\ket{+}$ and $\ket{-}$ encode uniform distributions over the computational basis states.  In the parlance of quantum mechanics, $\alpha$, $\beta$ and $\pm1/\sqrt{2}$ are all \emph{probability amplitudes}, and $\ppsi$, $\ket{+}$ and $\ket{-}$ are all \emph{superpositions} of the computational basis states. Quantum \emph{measurement} of $\ppsi$ results in $\ket{0}$ with probability $|\alpha|^2$, but---in the following---we can safely think of this physical procedure as drawing a single discrete sample from $\ppsi$'s implied probability distribution.

Quantum logic gates perform linear operations on qubits like $\ppsi$ and take the form of unitary matrices $\U$ satisfying $\U^\dagger \U=\I$ for $\U^\dagger$ the conjugate transpose of $\U$. One terrifically important single-qubit quantum gate is the \emph{Hadamard gate}
\begin{align*}
	\H = \frac{1}{\sqrt{2}} \left[\begin{array}{rr}
		1 & 1 \\
		1 & -1 \end{array}\right] \, .
\end{align*}
One may verify that $\H$ is indeed unitary and that its action maps $\ket{0}$ to $\ket{+}$ and $\ket{1}$ to $\ket{-}$.  In fact, the reverse is also true on account of the symmetry of $\H$.  The Hadamard gate thus takes the computational basis states in and out of superposition, facilitating a phenomenon called \emph{quantum parallelism}. Given a function $f: \{0,1\} \rightarrow \{0,1\}$, consider the two-qubit quantum \emph{oracle} gate $\U_f$ which takes $\ket{x}\ket{y}$ as input and returns $\ket{x}\ket{y\oplus f(x)}$ as output, where $\oplus$ denotes addition modulo 2.  Importantly, the output simplifies to $\ket{x}\ket{f(x)}$ for $y=0$. We now consider a quantum circuit that acts on two qubits by first applying the Hadamard transform $\H$ to the first qubit and then applying the oracle gate $\U_f$ to both. Using the state $\ket{0}\ket{0}$ as input, we have
\begin{align}\label{eq:quantparallel}
	\ket{0}\ket{0} \longrightarrow \frac{1}{\sqrt{2}} \ket{0}\ket{0} + \frac{1}{\sqrt{2}} \ket{1}\ket{0}  \longrightarrow \frac{1}{\sqrt{2}} \ket{0}\ket{f(0)} + \frac{1}{\sqrt{2}} \ket{1}\ket{f(1)} \, .
\end{align}
The quantum circuit evaluates $f(\cdot)$ simultaneously over both inputs!  Unfortunately, the scientist who implements this circuit cannot directly access $\U_f$'s output, and measurement will provide only $\ket{0}\ket{f(0)}$ or $\ket{1}\ket{f(1)}$ with probability $1/2$ each.  Unlocking the potential of quantum parallelism requires more ingenuity.

Uncountably many single- and two-qubit quantum gates exist, but the real power of quantum computing stems from the development of complex quantum gates that act on multiple qubits simultaneously.   To access this power, we need one more tool that also appears in the statistics literature.  The \emph{Kronecker} or \emph{tensor} product between an $L$-by-$M$ matrix $\A$ and an $N$-by-$O$ matrix $\B$ is the $LN$-by-$MO$ matrix
\begin{align}\label{eq:kron}
	\A \otimes \B = \begin{bmatrix}
		A_{11} \B & \dots & A_{1M} \B \\
		\vdots & \ddots & \vdots \\
		A_{L1}\B & \cdots & A_{LM}  \B
	\end{bmatrix} \, .
\end{align} 
In statistics, the Kronecker product features in the definition of a matrix normal distribution and is sometimes helpful when specifying the kernel function of a multivariate Gaussian process \citep{werner2008estimation}. Here, the product is equivalent to the parallel action of individual quantum gates on individual qubits.  Simple application of Formula \eqref{eq:kron} shows that
\begin{align}\label{eq:twoqubits}
	\H^{\otimes 2} =\H \otimes \H = \frac{1}{2} \left[\begin{array}{rrrr}
		1 & 1 & 1 & 1 \\
		1 &  -1& 1 & -1 \\
		1 & 1 & -1 & -1 \\
		1 & -1 & -1 & 1
	\end{array}\right] \quad \mbox{and} \quad \ket{0}\otimes\ket{0}=\begin{bmatrix}
		1 \\ 0
	\end{bmatrix} \otimes \begin{bmatrix}
		1 \\ 0
	\end{bmatrix} = \begin{bmatrix}
		1 \\ 0 \\ 0 \\ 0
	\end{bmatrix} \, .
\end{align} 
We may also denote the left product $\H^{\otimes2}$ and the right product $\ket{0}^{\otimes2}$, $\ket{00}$ or $\ket{0}\ket{0}$. One may therefore express \eqref{eq:quantparallel} as a series of transformations applied to the 4-vector on the very right. Letting $\ket{10}$, $\ket{01}$ and $\ket{11}$ take on analogous meanings to $\ket{00}$, an immediate result of \eqref{eq:twoqubits} is that
\begin{align}\label{eq:twodmult}
	\H^{\otimes2} \ket{00} = \frac{1}{2}\Big(\ket{00} + \ket{01} + \ket{10} + \ket{11}\Big) \, .
\end{align}
The action of $\H^{\otimes2}$ transforms $\ket{00}$ into a superposition of the states $\ket{00}$, $\ket{10}$, $\ket{01}$ and $\ket{11}$, where the probability of selecting each is a uniform $(1/2)^2=1/4$. Writing so many $0$s and $1$s is tedious, so an alternative notation becomes preferable. Exchanging $\ket{0}$ for $\ket{00}$, $\ket{1}$ for $\ket{01}$, $\ket{2}$ for $\ket{10}$, and $\ket{3}$ for $\ket{11}$, \eqref{eq:twodmult} becomes the more succinct
\begin{align*}
	\H^{\otimes2} \ket{00} = \frac{1}{2}\Big(\ket{0} + \ket{1} + \ket{2} + \ket{3}\Big) \, .
\end{align*}
This formula extends generally to operations over $n$ qubits.  Now letting $N=2^n$, we have
\begin{align*}
	\H^{\otimes n} \ket{0}^{\otimes n} = \frac{1}{\sqrt{N}}\Big(\ket{0} + \ket{1} + \dots + \ket{N-1}\Big) =  \frac{1}{\sqrt{N}} \sum_{x=0}^{N-1} \ket{x}=:\ket{h} \, ,
\end{align*}
and we call $\ket{h}$ a \emph{uniform superposition} over the states $\ket{0},\dots,\ket{N-1}$. The many-qubit analogue for the quantum parallelism of \eqref{eq:quantparallel} is then
\begin{align*}
	\ket{0}^{\otimes n}\ket{0} \longrightarrow \left(\frac{1}{\sqrt{N}} \sum_{x=0}^{N-1} \ket{x} \right) \ket{0}  \longrightarrow \frac{1}{\sqrt{N}} \sum_{x=0}^{N-1} \ket{x} \ket{f(x)} \, .
\end{align*}

\section{Gumbel-max}\label{sec:gm}

We wish to randomly select a single element from the set $\{0,1,\dots,P\}$ with probability proportional to the unnormalized probabilities $\ppi^*=(\pi_0^*,\pi_1^*,\dots,\pi_P^*)$.  That is, there exists a $c>0$ such that $\ppi^* = c \ppi$, for $\ppi$ a valid probability vector, but we only have access to $\ppi^*$.  Define $\llambda:= \log \ppi$ and $\llambda^*:= \log \ppi^*=\log\ppi + \log c$,  and assume that $z_0,\dots,z_P\stackrel{iid}{\sim}Gumbel(0,1)$. Then, the probability density function $g(\cdot)$ and cumulative distribution function $G(\cdot)$ for each individual $z_p$ are
\begin{align*}
	g(z_p) = \exp\big(-z_p- \exp(-z_p)\big)  \quad \mbox{and} \quad G(z_p) = \exp \big( -\exp(-z_p) \big)\, .
\end{align*}
Now, defining the random variables $\alpha^*_p:=\lambda^*_p+z_p$, $\alpha_p:=\lambda_p+z_p$ and  
\begin{align*}
	\hat{p} = \argmax_{p=0,\dots,P}\: \alpha^*_p \, ,
\end{align*}
we have the result
\begin{align*}
	\mbox{Pr} (\hat{p} = p) = \pi_p \, .
\end{align*}
In words, the procedure of adding Gumbel noise to unnormalized log-probabilities and taking the index of the maximum produces a random variable that follows the discrete distribution over $\ppi$.  Moving from left to right:
\begin{align*}
	\mbox{Pr} (\hat{p} = p) &= \mbox{Pr} (\alpha^*_p > \alpha^*_{p'}, \, \forall p' \neq p ) \\
	&= \mbox{Pr} (\alpha_p +\log c > \alpha_{p'} + \log c, \, \forall p' \neq p ) 
	= \mbox{Pr} (\alpha_p  > \alpha_{p'} , \, \forall p' \neq p ) \\
	&= \int_{-\infty}^\infty \prod_{p'\neq p} \mbox{Pr} (\alpha_p > \alpha_{p'}| \alpha_p) g(\alpha_p-\lambda_p) \, \dd \alpha_p 
	= \int_{-\infty}^\infty \prod_{p'\neq p} G(\alpha_p-\lambda_{p'}) g(\alpha_p-\lambda_p) \, \dd \alpha_p \\
	&=  \int_{-\infty}^\infty  \prod_{p'\neq p} \exp\big( -\exp(\lambda_{p'}-\alpha_p)\big) \exp\big(-\alpha_p+\lambda_p - \exp(-\alpha_p+\lambda_p) \big) \, \dd \alpha_p \, .
\end{align*}
Recalling that $\lambda_{p'}=\log \pi_{p'}$, we exponentiate the logarithms, and the integral becomes
\begin{align*}
	& \pi_p \int_{-\infty}^\infty  \prod_{p'\neq p} \exp\big( -\pi_{p'}\exp(-\alpha_p)\big) \exp  \big(-\alpha_p - \pi_p\exp(-\alpha_p) \big) \, \dd \alpha_p \\
	&= \pi_p \int_{-\infty}^\infty \exp  (-\alpha_p ) \exp\big( -\sum_{p'=0}^P\pi_{p'}\exp(-\alpha_p)\big)  \, \dd \alpha_p  \\
	&= \pi_p \int_{-\infty}^\infty \exp  (-\alpha_p ) \exp\big( -\exp(-\alpha_p)\big)  \, \dd \alpha_p  = \pi_p \, ,
\end{align*}
where the final equality follows easily from a change of variables. 

\section{Mixing of parallel MCMC and QPMCMC}\label{sec:mixing}

 \begin{figure}[!t]
	\centering
	\includegraphics[width=0.7\linewidth]{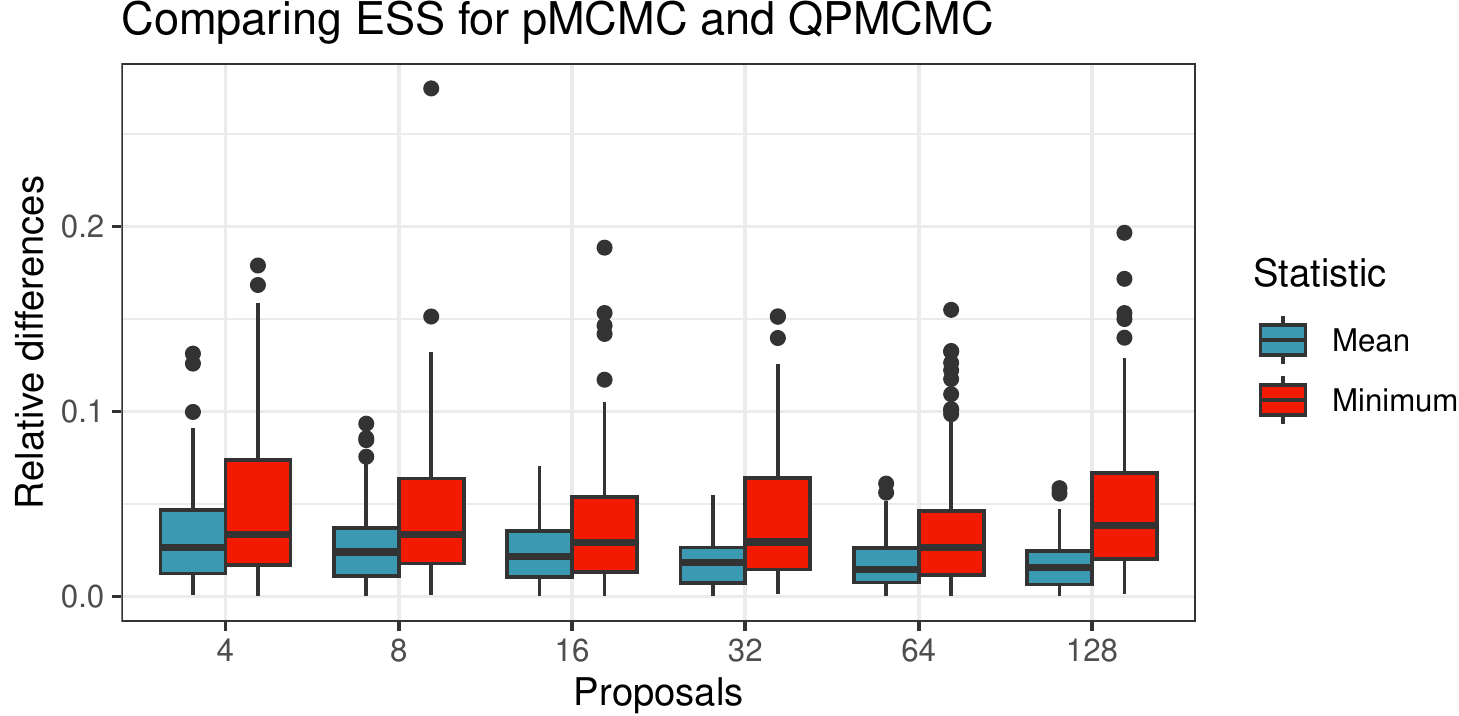}
	\caption{Relative differences between effective sample sizes (ESS) for parallel MCMC (pMCMC) and quantum parallel MCMC (QPMCMC) across a range of proposal counts. We target a 10-dimensional standard normal distribution.  For each algorithm and each proposal setting, we generate 100 independent chains of length 10,000 and calculate the mean and minimum ESS across dimensions.}\label{fig:mixing} 
\end{figure}

To ascertain whether QPMCMC mixes differently compared to conventional parallel MCMC, we run both algorithms for a range of proposal counts to sample from a 10-dimensional standard normal distribution. For each algorithm and proposal setting, we run 100 independent chains for 10,000 iterations and obtain effective sample sizes ESS$_d$ for $d \in \{1,\dots,10\}$.   We then calculate the relative differences between the means and minima of one chain generated using parallel MCMC and one chain generated using QPMCMC; for example:
\begin{align*}
	\mbox{Relative difference between means} \: \overline{\mbox{ESS}}_{(\cdot)} = \frac{\left|\overline{\mbox{ESS}}_{pMCMC} - \overline{\mbox{ESS}}_{QPMCMC}\right| }{\overline{\mbox{ESS}}_{pMCMC} }
\end{align*}
Figure \ref{fig:mixing} shows results.  In general, the majority relative differences are small.  For both statistics, mean relative differences are less than 0.05, regardless of proposal count.  Again for both statistics, more than 75\% of the independent runs result in relative differences below 0.1.  We note that relative differences between means (blue) appear to decrease with the number of proposals, but the same does not hold for relative differences between minima (red).

\section{Note on simulations}\label{sec:sim}

We use \textsc{R} \citep{rlang}, \textsc{Python} \citep{vanrossum1995python}, \textsc{TensorFlow} \citep{abadi2016tensorflow} and  \textsc{TensorFlow Probability MCMC} \citep{lao2020tfp} in our simulations and the \textsc{ggplot2} package to generate all figures \citep{ggplot}.   In \textsc{R}, we use the package \textsc{coda} \citep{coda} to calculate effective sample sizes.  In \textsc{Python}, we use a built-in function from \textsc{TensorFlow Probability MCMC}.  The primary color palette comes from \citet{wes}.

All simulations rely on the fact that the Grover iterations of \eqref{eq:grov} manipulate the probability amplitudes in a deterministic manner. For example, the following \textsc{R} code specifies $N$ uniform probability amplitudes (\verb|sqrtProbs|), performs $I$ Grover iterations and measures a single state according to the resulting probabilities.
\begin{Verbatim}[commandchars=\\\{\}]
	\PYG{n}{sqrtProbs} \PYG{o}{\PYGZlt{}\PYGZhy{}} \PYG{n+nf}{rep}\PYG{p}{(}\PYG{l+m}{1}\PYG{o}{/}\PYG{n+nf}{sqrt}\PYG{p}{(}\PYG{n}{N}\PYG{p}{),}\PYG{n}{N}\PYG{p}{);} \PYG{n}{i} \PYG{o}{\PYGZlt{}\PYGZhy{}} \PYG{l+m}{1}
	\PYG{n+nf}{while }\PYG{p}{(}\PYG{n}{i} \PYG{o}{\PYGZlt{}=} \PYG{n}{I}\PYG{p}{)} \PYG{p}{\PYGZob{}}
	\PYG{n}{sqrtProbs}  \PYG{o}{\PYGZlt{}\PYGZhy{}} \PYG{p}{(}\PYG{l+m}{1\PYGZhy{}2}\PYG{o}{*}\PYG{n}{marks}\PYG{p}{)}\PYG{o}{*}\PYG{n}{sqrtProbs}
	\PYG{n}{sqrtProbs}  \PYG{o}{\PYGZlt{}\PYGZhy{}} \PYG{o}{\PYGZhy{}}\PYG{n}{sqrtProbs} \PYG{o}{+} \PYG{l+m}{2}\PYG{o}{*}\PYG{n+nf}{mean}\PYG{p}{(}\PYG{n}{sqrtProbs}\PYG{p}{)}
	\PYG{n}{i} \PYG{o}{\PYGZlt{}\PYGZhy{}} \PYG{n}{i} \PYG{o}{+} \PYG{l+m}{1}
	\PYG{p}{\PYGZcb{}}
	\PYG{n}{measurement} \PYG{o}{\PYGZlt{}\PYGZhy{}} \PYG{n+nf}{sample}\PYG{p}{(}\PYG{n}{size}\PYG{o}{=}\PYG{l+m}{1}\PYG{p}{,} \PYG{n}{x}\PYG{o}{=}\PYG{l+m}{1}\PYG{o}{:}\PYG{n}{N}\PYG{p}{,} \PYG{n}{prob}\PYG{o}{=}\PYG{n}{sqrtProbs}\PYG{o}{\PYGZca{}}\PYG{l+m}{2}\PYG{p}{)}
\end{Verbatim}
Of course, simulating these iterations on a classical computer requires precomputing the values of \verb|marks| beforehand.   All code is available online at \url{https://github.com/andrewjholbrook/qpMCMC}.

\bibliographystyle{sysbio}
\bibliography{refs}


\end{document}